\def\D{\mathrm{d}}
\journalname{Mathematical Biology}
\begin{document}

\title{Robustness of movement models: can models bridge the gap between temporal scales of data sets and behavioural processes?%\thanks{Grants or other notes
%about the article that should go on the front page should be
%placed here. General acknowledgments should be placed at the end of the article.}
}
%\subtitle{Do you have a subtitle?\\ If so, write it here}

\titlerunning{Robustness of discrete-time movement models}        % if too long for running head

\author{Ulrike E.\ Schl\"agel         \and
        Mark A.\ Lewis %etc.
}

%\authorrunning{Short form of author list} % if too long for running head

\institute{U.E.\ Schl\"agel \at
              Department of Mathematical and Statistical Sciences, CAB 632, University of Alberta, Edmonton, AB T6G 2G1, Canada \\
              %Tel.: +123-45-678910\\
              %Fax: +123-45-678910\\
              \email{ulrike.schlaegel@gmail.com}             %\\
              %\emph{Present address:}   %  if needed
           \and
           M.A.\ Lewis \at
              Department of Mathematical and Statistical Sciences, CAB 632, University of Alberta, Edmonton, AB T6G 2G1, Canada
}

\date{Received: date / Accepted: date}
% The correct dates will be entered by the editor

\maketitle

\begin{abstract}
Discrete-time random walks and their extensions are common tools for analyzing animal movement data. In these analyses, resolution of temporal discretization is a critical feature. Ideally, a model both mirrors the relevant temporal scale of the biological process of interest and matches the data sampling rate. Challenges arise when resolution of data is too coarse due to technological constraints, or when we wish to extrapolate results or compare results obtained from data with different resolutions. Drawing loosely on the concept of robustness in statistics, we propose a rigorous mathematical framework for studying movement models' robustness against changes in temporal resolution. In this framework, we define varying levels of robustness as formal model properties, focusing on random walk models with spatially-explicit component. With the new framework, we can investigate whether models can validly be applied to data across varying temporal resolutions and how we can account for these different resolutions in statistical inference results. We apply the new framework to movement-based resource selection models, demonstrating both analytical and numerical calculations, as well as a Monte Carlo simulation approach. While exact robustness is rare, the concept of approximate robustness provides a promising new direction for analyzing movement models.
%Insert your abstract here. Include keywords, PACS and mathematical subject classification numbers as needed.
\keywords{animal movement \and sampling rate \and resource selection \and GPS data \and parameter estimation \and Markov model}
% \PACS{PACS code1 \and PACS code2 \and more}
 \subclass{92B05  \and 60J20 \and 62M05 \and 62-07}
\end{abstract}

%%%%%%%%%%%%%%%%%%%%%%%%%%%%%%%%%%
\section{Introduction}
\label{s;intro}
%%%%%%%%%%%%%%%%%%%%%%%%%%%%%%%%%%

Major advances in tracking technology during the last decades have made large datasets of animal movement available to ecologists, and analyses of data have become widespread in ecology. These analyses have shed light on mechanisms that underly fundamental processes such as migration \citep{Robinson:2009br,Costa:2012ba}, navigation \citep{Tsoar:2011wz,Benhamou:2011jw}, or home range behaviour and territoriality \citep{Borger:2008gc,Potts:2014dr,Giuggioli:2014tc}. They  have helped to identify conservation goals by revealing habitat preferences and critical environmental features for populations \citep{Sawyer:2009uq,Colchero:2010cb,Ito:2013hg,Masden:2012bp}, as well as the role of important mutualistic interactions between mobile animals and immobile plants \citep{Cortes:2013um,Mueller:2014hq}. These are only few of the many facets of movement ecology. 

Mathematical and statistical models provide a framework for studying movement \citep{Schick:2008dn,Smouse:2010ku,Langrock:2012ek}. When linking models to data, we can estimate model parameters and identify best-fitting models, thus inferring unknown quantities or mechanisms in movement behaviour. 
Although movement itself is a continuous process, many individual-based movement models treat time as a discrete variable, viewing movement as a series of locations in space, or equivalently as a series of steps \citep{Turchin:1998td,McClintock:2014in}. This may largely be ascribed to data being available in this format. Discrete-time models may thus be an intuitive first choice to describe a sampled movement path. However, there may be more reasons to use discrete-time models. The continuous movement path of an animal may consist of various behaviours at different scales \citep{Johnson:2002hs,Benhamou:2013de}. Using a discrete-time model at the scale of interest allows us to focus on the behavioural mechanisms at that scale, while, for example, combining other unknown processes as stochastic effects. Also, the choice of time formulation in a movement model can have side effects that impact inference results. For example, \cite{McClintock:2014in} demonstrated that using a continuous-time Ornstein-Uhlenbeck process in a hierarchical model for identifying behavioural states led to difficulties discriminating between states, due to an inherent correlation between the variables step length and bearing in the Ornstein-Uhlenbeck process.

Many movement models are based on random walks \citep{Turchin:1998td,Codling:2008js,McClintock:2014in}. From simple random walks that assume independently and identically distributed steps, we have moved to correlated random walks, which include directional persistence \citep{Kareiva:1983vf}, and biased random walks, which can, for example, be used to model centralizing tendencies or long-term directional goals \citep{Benhamou:2006kq,Borger:2008gc,McClintock:2012bw}. Many animals live in heterogeneous environments, and the composition of the environment and availability of resources influence movement decisions \citep{Fortin:2005uk,McPhee:2012ii}. Therefore, another trend of random-walk extensions has left behind  assumptions about environmental homogeneity in favour of spatially-explicit models that incorporate habitat effects on movement decisions \citep{Rhodes:2005uu,Avgar:2013hz,Potts:2014wt}. These models have an advantage over statistical resource-selection and step-selection functions \citep{Manly:2002un,Fortin:2005uk,Forester:2009wa} by allowing simultaneous estimation of movement parameters and environmental effects.

When linking discrete-time models to data, the temporal resolution of the discretization is a critical feature that must be chosen with care. Different time scales may come into play and need to be consolidated. On the one hand, a time scale is given by the biological process of interest. For example, we may be interested in inferring behavioural mechanisms of a movement process and thus need to consider the time scale at which these mechanisms are relevant. The discretization of a model should represent this scale appropriately. On the other hand, a different time scale may be given by the data collection rate. In practice, the sampling rate of data is subject to technological constraints. One of the major limitations of electronic tagging devices such as Argos or GPS tags is battery life, imposing a tradeoff between measurement rate and total deployment time \citep{Ryan:2004et,Breed:2011ux}. Also, to avoid a large noise to signal ratio, the time interval should be chosen so that measurement error relative to distance travelled during a time interval is small \citep{Ryan:2004et}. For slow moving animals and depending on the accuracy of the tagging device, a minimum time interval of an hour may be necessary \citep{Jerde:2005fp}. Therefore, the resolution of the data may not always match the time scale of the behavioural process of interest. In this case, it becomes a challenge for a model to overcome the conflict. 

A related problem is that sampling rate can affect data analysis results \citep{Codling:2005uo,Rowcliffe:2012hp,Postlethwaite:2013ki}. A common measure calculated from raw movement data is the total distance travelled, which can provide useful information about an animal's energetic expenditures. It is well documented that this quantity is highly influenced by the sampling rate of the data \citep{Ryan:2004et,Mills:2006tn,Tanferna:2012fc,Rowcliffe:2012hp}. A range of studies demonstrated that other fundamental movement characteristics vary with data sampling frequency as well, for example path sinuosity and apparent speed  \citep{Codling:2005uo}, movement rate and turning angle \citep{Postlethwaite:2013ki}, and estimates of territory size \citep{Mills:2006tn}. One of the main problems underlying these effects is information loss when subsampling a movement path. This also impairs our capacity to correctly estimate behavioural states through hierarchical modelling approaches that have become widespread in movement analyses \citep{Breed:2011ux,Rowcliffe:2012hp}. These findings demonstrate that great care is needed when extrapolating movement analysis results beyond the temporal scale of a study. Comparisons of results may not be appropriate if the temporal resolution of the data varies too much, but it is unclear what constitutes `too much'.

Despite the evidence of the extent of the problem, little is known about how to solve it. Previous approaches have been mainly empirical, using very fine scale data or synthetic data from simulations, which are subsampled at various resolutions. Movement characteristics calculated at these varying sampling rates are then compared to the values based on the full data, which represent the `true' values. Some studies have fitted functions to the relationships of movement characteristics and sampling rate \citep{Pepin:2004bh,Codling:2005uo,Mills:2006tn}. These empirically obtained functions may be used to correct movement characteristics for sampling rate. While correction factors derived from movement data remain situation-specific and cannot easily be applied across species \citep{Ryan:2004et,Rowcliffe:2012hp}, we can obtain more general results by analyzing the effects of sampling rate at the level of the model \citep{Codling:2005uo,Rosser:2013ia}. Often, important characteristics of movement can be well captured by models, and therefore analyzing the properties of models can provide more general insights. However, only few such studies exist. An approach to circumvent the problem of scale-dependent statistical inference has been taken by \cite{Fleming:2014te}, who use the semivariance function of a stochastic movement process to identify multiple movement modes acting at different temporal scales. The method takes into account all possible time lags between observations. However, there are limitations as to the movement processes that can be included in this analysis \citep{Fleming:2014te}.

Here, we present a rigorous framework for studying how movement models react to changes in sampling rate, and we use this framework to analyze a class of models based on random walks. With our analysis, we seek to understand whether, and how, models can help to compensate mismatching temporal scales between different data sets or between data and behavioural process of interest. Focusing on spatially-explicit random walks, we investigate whether there are models that can validly be applied to data with different temporal resolutions and how we can account for the differences in resolutions in our interpretation of statistical inference results. In particular, we are interested in how model parameters, and their estimates, change as we decrease the temporal resolution. While estimates may change due to a shift in behavioural scale, which we always need to be aware of, we are interested in the changes that arise from the method, that is the model. Our framework is related to the statistical concept of robustness, which aims at safeguarding statistical procedures against violations of model assumptions \citep{Hampel:1986vi,Huber:2009vi}. Often, such violations refer to deviations from assumed probability distributions (e.g. Normal errors), which may result in outliers, misspecified relationships between response and explanatory variables in regression analyses, or violations of the common independence assumption. In this paper, we define robustness of movement models against changes in temporal discretization. In our framework, we treat robustness as a formal property of a model, namely the movement model. If a model has this property, it can be applied to data with varying resolutions. Additionally, while model parameters do not stay the same, they change systematically and can be translated between resolutions.

Our paper is outlined as follows. In section~\ref{s;definitions}, we define what we mean by a movement model to be robust against changes in temporal resolution. We provide three different definitions, varying in their strength of conditions. In section~\ref{s;analysis}, we present different approaches how the definitions can be used to analyze robustness of movement models. Depending on models' complexity and preexisting information, we can use formal analytical methods, numerical calculations, as well as Monte Carlo and simulation approaches. We use these approaches to examine robustness of spatially-explicit random walks and resource-selection models, and we summarize our findings in section~\ref{s;results}. In section~\ref{s;discussion}, we discuss the relevance of our robustness framework for statistical inference and also draw specific conclusions for spatially-explicit resource-selection models.

%%%%%%%%%%%%%%%%%%%%%%%%%%%%%%%%%%%%%%%%%%%%%%%%%%
\section{Robustness of Markovian movement models}
\label{s;definitions}
%%%%%%%%%%%%%%%%%%%%%%%%%%%%%%%%%%%%%%%%%%%%%%%%%%

We consider movement models that are discrete-time Markov processes of the form $(\bm{X}_t,\, t\in T)$, where $\bm{X}_t\in\mathbb{R}^2$ is an individual's location and $T=\{0,\tau, 2\tau, \dots\}$ is a set of regularly spaced times. This means that we assume that the time interval $\tau>0$ between two successive location measurements is fixed. Such data often arise from terrestrial animals fitted with GPS devices \citep{Frair:2010vw}. The time interval $\tau$ of the model is usually specified by the resolution of the data. We denote the one-step transition density for the probability of moving from location $\bm{y}$ to $\bm{x}$ between times $t-\tau$ and $t$ by $p_{t-\tau,t}(\bm{x}|\bm{y}, \bm{\theta})$, where $\bm{\theta}\in\bm{\Theta}$ is a vector of model parameters. This notation highlights that the transition density can be time-heterogeneous.

We consider sub-models that consist of every $n$th location of the original model for $n\in\mathbb{N}$. The transition density of the $n$th sub-model for the probability of moving from location y to x between times $t-n\tau$ and $t$ is denoted by $p_{t-n\tau,t}(\bm{x}|\bm{y}, \bm{\theta})$; compare Fig.~\ref{f;fig1}. A priori, the function $p_{t-n\tau,t}$ can have an entirely different form than $p_{t-\tau,t}$ and may correspond to a different probability distribution. However, via the Chapman-Kolmogorov equation, the $n$-step transition density can be written as a marginal density,
\begin{multline}
p_{t-n\tau,t}(\bm{x}_t|\bm{x}_{t-n\tau}, \bm{\theta})  \\
= \int_{\mathbb{R}^2\times \dots \times\mathbb{R}^2} p_{\text{joint}}(\bm{x}_t, \bm{x}_{t-\tau}, \dots, \bm{x}_{t-(n-1)\tau} | \bm{x}_{t-n\tau}, \bm{\theta}) \, d\bm{x}_{t-\tau}\dots \D \bm{x}_{t-(n-1)\tau},
\end{multline}
where we marginalize over all intermediate locations visited between times $t-n\tau$ and $t$. For simplicity, we use the general subscript `joint' to denote any joint density of multiple locations. From the notation of the locations it is clear which joint density is meant. The Markov property of the model allows us to stepwise split up the joint density as follows
\begin{multline}
p_{\text{joint}}(\bm{x}_t, \bm{x}_{t-\tau}, \dots, \bm{x}_{t-(n-1)\tau} | \bm{x}_{t-n\tau}, \bm{\theta}) \\
= p_{t-\tau,t}(\bm{x}_t|\bm{x}_{t-\tau},\bm{\theta}) \, p_{\text{joint}}(\bm{x}_{t-\tau}, \dots, \bm{x}_{t-(n-1)\tau} | \bm{x}_{t-n\tau}, \bm{\theta}).
\end{multline}
We can continue this until we obtain
\begin{multline}\label{e;n-step density}
p_{t-n\tau,t}(\bm{x}_t|\bm{x}_{t-n\tau}, \bm{\theta}) \\
= \int_{\mathbb{R}^2\times \dots \times\mathbb{R}^2} \, \prod_{k=1}^{n-1} p_{t-k\tau,t-(k-1)\tau}(\bm{x}_{t-(k-1)\tau}|\bm{x}_{t-k\tau}, \bm{\theta}) \, d\bm{x}_{t-\tau}\dots \D \bm{x}_{t-(n-1)\tau}.
\end{multline}
Therefore, we can use the one-step densities to calculate the $n$-step density; compare Fig.~\ref{f;fig1}. For statistical inference, and thus for our robustness concept, the model parameter vector $\bm{\theta}$ plays a crucial role. Although the $n$-step density may belong to a different distribution than the one-step density, equation~\eqref{e;n-step density} justifies that we use the same parameter $\bm{\theta}$ in the notation of the $n$-step density as in the one-step density.

We define robustness in terms of the one-step and $n$-step densities of a model.
\begin{definition}[Robustness of degree $n$]\label{t;definition robustness}
Let $n\in \mathbb{N}$ be finite. A movement model of the above type is \emph{robust of degree $n$} if there exists an injective function $g_n: \bm{\Theta}\rightarrow \bm{\Theta}$ such that
\begin{equation}\label{e;robustness}
 p_{t-n\tau,t}(\bm{x}|\bm{y}, \bm{\theta}) = p_{t-\tau,t}(\bm{x}|\bm{y}, g_n(\bm{\theta})) \ \text{for all}\ t\in T \ \text{and} \ \bm{x},\bm{y} \in \mathbb{R}^2.
\end{equation}
\end{definition}
This definition requires that the $n$-step densities are of the same functional form as the one-step transitions, where parameters of the model are appropriately transformed via the function $g_n$. This means that if a model is robust, the $n$th sub-model is in fact the same as the original model but with systematically adjusted parameters. The parameter transformation $g_n$ allows us to extrapolate the original parameter $\bm{\theta}$ to the coarser temporal discretization of the $n$the sub-model. Additionally, we can use the $n$th sub-model to infer the parameter $\bm{\theta}$ of the original model, because we can invert $g_n(\bm{\theta})$. Note, however, that this rests on the assumption that the original model defines the process of interest. If, instead we start at the coarser resolution, we would also need surjectivity of the function $g_n$ to conclude the existence of the finer model.

Robustness of degree $n$ has important implications. Given a behavioural process of interest, described by a robust model with parameter $\bm{\theta}$, we can apply the model not only to data with matching temporal resolution $\tau$ but also to coarser data with resolution $n\tau$ (e.g. double time interval for $n=2$). The parameter estimate $\bm{\psi}$ that we obtain from the coarser data is in fact an estimate of $g_n(\bm{\theta})$. From this, we can infer the value of $\bm{\theta}$ via $\bm{\theta} = g_n^{-1}(\bm{\psi})$. Additionally, robustness allows us to compare studies pertaining to the same behavioural process but using data sets with different resolutions. If $\bm{\theta}$ is the estimate based on the finer data, it can be extrapolated to the coarser scale via the parameter transformation $g_n(\bm{\theta})$, for all degrees $n$ for which the model is robust.

Robustness as in Definition \ref{t;definition robustness} is a strong condition that we do not expect to hold but in few special cases of the density $p_{t-\tau,t}(\bm{x}|\bm{y},\bm{\theta})$. However, equation \eqref{e;robustness} may hold up to a function $v(\bm{x},\bm{y})$, where $v$ is a bounded function that could also depend on $n$ or $\tau$. For practical applications, such \emph{approximate} or \emph{asymptotic robustness} may be sufficient. Therefore, we provide two additional definitions.
\begin{definition}[Asymptotic robustness of degree $n$]\label{t;definition asymptotic}
Let $n\in \mathbb{N}$ be finite. A movement model of the above type is said to be \emph{asymptotically robust of degree $n$} if there exists an injective function $g_n: \bm{\Theta}\rightarrow \bm{\Theta}$ and a function $v: \mathbb{R}^2\times\mathbb{R}^2\times\mathbb{R}^{+}\rightarrow \mathbb{R}^+$ with the property $v(\bm{x},\bm{y};\tau)-1=\mathcal{O}(\tau)$ on $\mathbb{R}^2\times\mathbb{R}^2 \times \mathbb{R}^+$, such that
\begin{equation}\label{e;definition asymptotic}
 p_{t-n\tau,t}(\bm{x}|\bm{y},\bm{\theta}) = p_{t-\tau,t}(\bm{x}|\bm{y}, g_n(\bm{\theta}))\, v(\bm{x},\bm{y};\tau) \ \text{for all}\ t\in T \ \text{and} \ \bm{x},\bm{y} \in \mathbb{R}^2.
\end{equation}
\end{definition}
Here, $\mathcal{O}$ denotes the Landau symbol for the order of a function. If a model is asymptotically robust, the $n$-step densities are not exactly the same as the one-step densities, as was required in Definition~\ref{t;definition robustness}. However, the discrepancy between the densities is bounded by a function that is proportional to $\tau$. More precisely, for an asymptotically robust model we have
\begin{equation}\label{e;asymptotic ratio}
1-C\tau \leq \frac{p_{t-n\tau,t}(\bm{x}|\bm{y},\bm{\theta})}{p_{t-\tau,t}(\bm{x}|\bm{y}, g_n(\bm{\theta}))} \leq 1+C\tau
\end{equation}
for all $\bm{x}$, $\bm{y}$ and $\bm{\theta}$, for some constant $C>0$. Therefore, if the time interval $\tau$ of the model is sufficiently small, the $n$-step density will closely resemble the one-step density with appropriately adjusted parameters. Asymptotic robustness of degree $n$ implies that robustness of degree $n$ is achieved as $\tau\rightarrow 0$, that is when the time interval~$\tau$ approaches zero.

In applications, the time interval $\tau$ may not be chosen sufficiently small for Definition~\ref{t;definition asymptotic} to be useful. Therefore, we give a variation of Definition~\ref{t;definition asymptotic}, in which the function $v$ does not depend on $\tau$.
\begin{definition}[Approximate robustness of magnitude $\delta$ and degree $n$]\label{t;definition approximate}
Let $n\in \mathbb{N}$ be finite. A movement model of the above type is said to be \emph{approximately robust of magnitude $\delta$ and degree $n$} if there exists an injective function $g_n: \bm{\Theta}\rightarrow \bm{\Theta}$ and a function $v: \mathbb{R}^2\times\mathbb{R}^2 \rightarrow \mathbb{R}^+$ with the property $0 < 1-\delta \leq v(\bm{x},\bm{y}) \leq 1+\delta$ for all $\bm{x}$, $\bm{y}\in \mathbb{R}^2$, for a $\delta>0$, such that
\begin{equation}\label{e;definition approximate}
 p_{t-n\tau,t}(\bm{x}|\bm{y}, \bm{\theta}) = p_{t-\tau,t}(\bm{x}|\bm{y}, g_n(\bm{\theta}))\, v(\bm{x},\bm{y}) \ \text{for all}\ t\in T \ \text{and} \ \bm{x},\bm{y} \in \mathbb{R}^2.
\end{equation}
\end{definition}
Analogously to equation~\eqref{e;asymptotic ratio}, condition~\eqref{e;definition approximate} can be written as 
\begin{equation}
1-\delta \leq \frac{p_{t-n\tau,t}(\bm{x}|\bm{y},\bm{\theta})}{p_{t-\tau,t}(\bm{x}|\bm{y}, g_n(\bm{\theta}))} \leq 1+\delta.
\end{equation}
In fact, we may consider two different types of magnitudes. Setting 
\begin{equation}
v(\bm{x},\bm{y}) := \frac{p_{t-n\tau,t}(\bm{x}|\bm{y},\bm{\theta})}{p_{t-\tau,t}(\bm{x}|\bm{y}, g_n(\bm{\theta}))},
\end{equation}
this function depends a priori on the parameters, that is we have $v(\bm{x},\bm{y};\bm{\theta})$, and the magnitude is $\delta_{\bm{\theta}}$. If $\max_{\bm{\theta}} \delta_{\bm{\theta}}$ exists, then this is the overall magnitude for the model with all possible parameter values. The magnitude determines how close $n$-step densities are to the parameter-adjusted one-step densities. If $\delta$ is small, then the correction function $v$ is close to one everywhere, and thus the $n$-step density has similar values as the one-step density over its entire domain. 

Asymptotic and approximate robustness have similar implications for inference as robustness, but only approximately. The quality of the approximation depends on $\tau$ or the magnitude $\delta$. Suppose we wish to estimate parameters of a behavioural process that we formulate in a model. Suppose we consider the time interval $\tau$ as suitable for the process. If the model is robust of degree $n$, we can use data not only at the matching scale but also at a coarser scale. For example, if the model is robust of degree 2, we can use data obtained at time interval $2\tau$. Because the model is also valid for the coarser scale, we can translate parameter estimates between the scales via the function $g_n$. If a model is asymptotically or approximately robust, the model is not exactly but still approximately valid for the coarser scale. To see this, consider the likelihood function
\begin{equation}
L_1(\bm{\theta}|\{\bm{x}_0, \bm{x}_{\tau}, \bm{x}_{2\tau}, \dots,\}) = \prod_{t\in\{\tau,2\tau,\dots\}} p_{t-\tau,t}(\bm{x}_t|\bm{x}_{t-\tau},\bm{\theta}).
\end{equation}
If a model is robust of degree $n$, the likelihood for data at time interval $n\tau$ is
\begin{equation}\label{e;likelihood}
\begin{split}
L_n(\bm{\theta}|\{\bm{x}_0, \bm{x}_{n\tau}, \bm{x}_{(n+1)\tau}, \dots,\})
&= \prod_{t\in\{n\tau,(n+1)\tau,\dots\}} p_{t-n\tau,t}(\bm{x}_t|\bm{x}_{t-n\tau},\bm{\theta}) \\
%&= \prod_{t\in\{n\tau,(n+1)\tau,\dots\}} p_{t-\tau,t}(\bm{x}_t|\bm{x}_{t-n\tau},g_n(\bm{\theta})).
&= L_1(g_n(\bm{\theta})|\{\bm{x}_0, \bm{x}_{n\tau}, \bm{x}_{(n+1)\tau}, \dots,\}).
\end{split}
\end{equation}
If a model is asymptotically robust, we have instead
\begin{equation}
L_1(g_n(\bm{\theta})) \cdot (1-C\tau +\mathcal{O}(\tau^2)) \leq L_n(\bm{\theta}) \leq L_1(g_n(\bm{\theta})) \cdot (1+C\tau +\mathcal{O}(\tau^2)),
\end{equation}
omitting the notation of the data, which is the same as in equation~\eqref{e;likelihood}.
Analogously, for approximate robustness we have
\begin{equation}
L_1(g_n(\bm{\theta})) \cdot (1-\delta +\mathcal{O}(\delta^2)) \leq L_n(\bm{\theta}) \leq L_1(g_n(\bm{\theta})) \cdot (1+C\delta +\mathcal{O}(\delta^2)).
\end{equation}

Therefore, if a model is asymptotically or approximately robust of degree $n$, we may loosely write $L_n(\bm{\theta}) \approx L_1(g_n(\bm{\theta}))$, that is the likelihood functions based on data at time interval $\tau$ and on data at interval $n\tau$ are approximately the same. Thus, if data at time interval $\tau$ is not available, we can analyze data at time interval $n\tau$ instead, using the likelihood $L_1$ of the original model. Parameter estimates obtained in this way can be translated to the scale $\tau$ by using the inverse parameter transformation $g_n^{-1}$. Such results from statistical inference based on $L_1$ may be close to results based on the correct $L_n$, which may be difficult to compute. How close results are depends on the quality of the approximations in Definitions~\ref{t;definition asymptotic} and \ref{t;definition approximate} via $\tau$ or $\delta$. For example, if a model is approximately robust with a very small magnitude $\delta$, the likelihood $L_1$ will describe data at time interval $n\tau$ almost as well as $L_n$.

%%%%%%%%%%%%%%%%%%%%%%%%%%%%%%%%%%%%%%%%%%%%%%%%%%%%
%%%%%%%%%%%%%%%%%%%%%%%%%%%%%%%%%%%%%%%%%%%%%%%%%%%%
\section{Analyzing spatially-explicit random walks}
\label{s;analysis}
%%%%%%%%%%%%%%%%%%%%%%%%%%%%%%%%%%%%%%%%%%%%%%%%%%%%
%%%%%%%%%%%%%%%%%%%%%%%%%%%%%%%%%%%%%%%%%%%%%%%%%%%%

We used the robustness definitions to analyze spatially-explicit random walk models. These models merge general movement tendencies of an individual with decisions based on specific characteristics of locations, such as environmental features and available resources. We investigated how the models react when applied to data with increasingly coarser temporal resolution.

Our robustness definitions have two key features. First, the one-step transition densities of the model and the $n$-step densities of the sub-models need to have the same form. Second, model parameters, which are parameters of the densities, need to be transformed by a known function $g_n$. We can assume different approaches to investigate robustness properties of a model, depending on whether we have a candidate for the parameter transformation $g_n$ or not. If prior knowledge allows us to investigate robustness for a given or hypothesized parameter transformation, we can calculate and compare the $n$-step density $p_{t-n\tau,t}(\bm{x} | \bm{y}, \bm{\theta})$ and the parameter-adjusted one-step density $p_{t-\tau,t}(\bm{x} | \bm{y}, g_n(\bm{\theta}))$. By showing equality of the two densities, we can verify robustness. For complex models, analytical calculations may be difficult, or even impossible. In these cases, we may resort to numerical calculations, especially when approximate robustness is sufficient. 

In many situations, we may not know $g_n$ a priori, nor have any anticipation. Or, we may have tested robustness for a hypothesized parameter transformation but got poor results. In these cases, we need to establish some information on possible forms of the parameter transformation. Additionally, for complex models, numerical calculation of the high-dimensional integral required for the $n$-step density (compare equation~\eqref{e;n-step density}) may become inaccurate. A solution is then to draw on the ideas of Monte Carlo sampling. Monte Carlo methods and simulations are useful when probability densities are difficult to compute in closed-form but can conveniently be sampled from \citep[e.g.,][]{robert2000monte}. In the following, we demonstrate both approaches for analyzing movement models' robustness.

\subsection{Analytical and numerical approach}
\label{s;analytical approach}

Spatially-explicit random walks can be created by merging two elements in the transition density of the model. One component is the general movement kernel $k_{\bm{\theta}_1}(x;y)$, which can be the transition density of any standard random walk, describing the probability that an individual takes a step from $y$ to $x$ if there were no environmental information available. A second part of the model, given by the weighting function $w_{\bm{\theta}_2}(x)$, rates each possible step based on the location $x$. The transition densities of the full model takes the form
\begin{equation}\label{e;transition density}
p_{t-\tau,t}(x|y, \bm{\theta}_1, \bm{\theta}_2) = \frac{k_{\bm{\theta}_1}(x;y)\,w_{\bm{\theta}_2}(x)}{\int_{\mathbb{R}} k_{\bm{\theta}_1}(z;y) w_{\bm{\theta}_2}(z) \, \D z }.
\end{equation}
The integral in the denominator serves as a normalization constant.

For simplicity, we restricted our analysis to the one-dimensional case, that is we assumed that $X_t\in\mathbb{R}$. We further focused on Gaussian kernels $k_{\bm{\theta}_1}(x;y) = k_{\sigma}(x;y)$, where $k_{\sigma}(x;y)$ is a Gaussian density with mean $y$ and standard deviation $\sigma$. The weighting function $w_{\bm{\theta_2}}(x)$ was assumed to be positive everywhere to ensure that equation~\eqref{e;transition density} defines a density. In the following we simply use $\bm{\theta}$ for the parameter vector of the weighting function, or, when it is clear which parameters refer to the weighting function, we drop the subscript for the parameter in the notation of the weighting function entirely.

Note that the transition density~\eqref{e;transition density} does not depend on time explicitly. Still, as the individual moves through space over time, the centre location $y$ of the kernel shifts. Although the kernel is a function of the distance $\|x-y\|$ only, the weighting function adds a spatially explicit component. Therefore, unless the individual remains at the same location, the transition kernel effectively changes at every time step. In the following, we omit the time-related subscript in the notation of the density and simply write $p_1$ for the transition density~\eqref{e;transition density} and $p_n$ for the $n$-step density. The time interval of the original process is always assumed to be $\tau$. The distinction between one-step and $n$-step density is still important, because the $n$-step density is in fact an integral over multiple one-step densities; compare equation~\eqref{e;n-step density}.

We investigated whether we could find weighting functions $w_{\bm{\theta}}(x)$ such that the model with transition density~\eqref{e;transition density} is robust, asymptotically robust or approximately robust. We started by verifying Definition~\ref{t;definition robustness} for simple cases of the weighting function for a fixed parameter transformation $g_n$. As highlighted above, the parameter transformation is a key element, translating parameters between different temporal resolutions. For the parameter of the Gaussian movement kernel $k_{\sigma}$, we obtained a candidate for the transformation based on the linearity of the Gaussian distribution. If we only consider the kernel $k_{\sigma}$, we have a simple random walk with normally distributed steps between locations. The $n$-step density~\eqref{e;n-step density} is then the density of a sum of $n$ normally distributed random variables, which is again normal with standard deviation $\sqrt{n}\sigma$. Therefore, we assumed that the transformation of the kernel's standard deviation was given by $g_n(\sigma) = \sqrt{n}\sigma$. For the parameters of the weighting function we assumed that they remain unaffected, that is $g_n(\bm{\theta})=\bm{\theta}$. This is an ideal property for a weighting function, as it guarantees validity of inference results across different sampling rates without further translation.

In a next step, we used the same parameter transformation $g_n(\sigma, \bm{\theta}) = (\sqrt{n}\sigma, \bm{\theta})$ to establish conditions on the weighting function such that the model is asymptotically robust.  For this, we assumed that the parameter of the kernel, the standard deviation, was influenced by the time interval $\tau$, that is $\sigma = \sigma(\tau)$. This reflects that an individual may travel larger distances during longer time intervals. Because of the linearity of the Gaussian distribution, we assumed the relationship $\sigma(\tau) = \sqrt{\tau}\omega$, for some $\omega>0$. For certain conditions on the weighting function, we verified Definition~\ref{t;definition asymptotic} analytically for the robustness degree $n=2$ by calculating the function $v(x,y;\tau)$ and placing bounds on it.

As alternative to an analytical approach, we can calculate the ratio of two-step and one-step density numerically to see whether we can find a function $v(x,y;\tau)$ according to Definition~\ref{t;definition asymptotic} for the degree $n=2$. Define $\delta(\tau) :=\max_{x,y}|v(x,y;\tau)-1|$. Note that since step densities depend on $\tau$ through $\sigma(\tau)$, we may equivalently consider $\delta(\sigma)$. If this is independent of the other parameters $\bm{\theta}$, we can obtain the bound on $v$ as $\delta:=\max_{\sigma}\delta(\sigma)$, if this maximum exists. More generally, we can consider $v(x,y,\sigma,\bm{\theta})$ and calculate $\delta_{\bm{\theta}}(\sigma) := \max_{x,y}|v(x,y;\sigma,\bm{\theta})-1|$. This $\delta_{\bm{\theta}}(\sigma)$ is the magnitude of approximate robustness (degree 2) for a model with a fixed weighting function, including parameter values. An overall magnitude for the family of models consisting of the model for all parameter values can be obtained as $\delta := \max_{\sigma,\bm{\theta}}\delta_{\bm{\theta}}(\sigma)$. We demonstrate these two numerical approaches with an example weighting function.

\subsection{Simulation approach}
\label{s;simulation approach}

\subsubsection{Resource selection models}

Resource selection analyses link animal location data and environmental variables to understand animals' space-use patterns in relation to their habitat. These studies provide insight into species' preferences or avoidance of habitat characteristics, which is important information for wildlife management and conservation purposes \citep{Hebblewhite:2008tg,Latham:2011uq,Squires:2013hs}. Central methodological elements are resource selection functions (RSF) and resource selection probability functions (RSPF), describing the probability of selection of certain units (e.g. pixels of land) by an organism based on environmental covariates \citep{Manly:2002un,Boyce:2002vz,Lele:2006wc}.  RSF and RSPF have been used on their own in a mere statistical framework \citep{Boyce:2002vz,Courbin:2013hu}, incorporated into spatially-explicit models \citep{Rhodes:2005uu,Aarts:2011fz}, and become part of mechanistic movement models \citep{Moorcroft:2008wa,Potts:2014wt}). We refer to \cite{Lele:2013hw} for details about the distinction of RSF and RSPF and use RSF as a general term for both concepts, unless otherwise stated.

We include resource selection in the spatially-explicit random walk with transition density~\eqref{e;transition density} by letting the weighting function take the form of an RSF, $w_{\bm{\theta}}(x) = w_{\bm{\theta}}(\bm{r}(x))$, where $\bm{r}(x)=(r_1(x),\dots,r_n(x))$ is a vector of resource covariates at location $x$. Each $r_j$ is a function over space, representing resource covariates such as elevation, biomass measures, land cover type, and much more. The transition density becomes
\begin{equation}\label{e;general resource model}
p_1(x|y, \sigma, \bm{\theta}) = \frac{k_{\sigma}(x;y)\,w_{\bm{\theta}}(\bm{r}(x))}{\int_{\mathbb{R}} k_{\sigma}(z;y) w_{\bm{\theta}}(\bm{r}(z)) \, \D z }.
\end{equation}
In practice, geographical information is spatially discrete, and therefore the normalizing integral in equation~\eqref{e;general resource model} becomes a sum over pixels, or cells, of land. Note that we still restrict our attention to one-dimensional models.

The RSF can take various forms, and here we consider the two most commonly used ones \citep{Manly:2002un,Lele:2006wc}, the exponential RSF, 
\begin{align}
w_{\exp}(\bm{r}(x)) &= \exp\left(\bm{\beta}\cdot \bm{r}(x) \right)  \label{e;exponential rsf} \\
\intertext{and the logistic function,}
w_{\log}(\bm{r}(x)) &= \frac{\exp\left( \alpha + \bm{\beta}\cdot \bm{r}(x) \right)}{1+\exp\left( \alpha + \bm{\beta}\cdot \bm{r}(x) \right)}.  \label{e;logistic rsf} 
\end{align}
The vector $\bm{\beta}$ comprises all selection parameters with respect to resource covariates $\bm{r}$. A higher selection parameter means stronger selection with respect to the corresponding resource. In the logistic form, $\alpha$ is an intercept parameter, which can shift the inflection point of the logistic function away from zero. The inflection point is the point where the logistic function attains a value of 0.5, that is where the probability of selecting a resource is 50\%. If the exponential form~\eqref{e;exponential rsf} is used, an intercept similarly to the one used in equation~\eqref{e;logistic rsf} is not identifiable, because it cancels in the definition of the transition density~\eqref{e;general resource model}. Therefore we have omitted it in equation~\eqref{e;exponential rsf}. The function $w_{\text{log}}$ has range $(0,1)$ and can therefore be used to describe probabilities. This means that this form can be used as RSPF, which for a given location $y$ specifies the probability that an animal selects this location, given the covariate values of the location. In contrast, the exponential RSF can only specify values proportional to this probability, with unknown proportionality constant \citep{Lele:2013hw}.

\subsubsection{Sampling models and sub-models}

We examined the two models with weighting functions $w_{\exp}$ and $w_{\log}$ for their robustness. Because the weighting functions depend on space through environmental information $\bm{r}$ they are highly non-linear, and therefore the transition densities are difficult to examine analytically. Sampling probability distributions is a convenient work around and has the additional advantage that we can control parameters and isolate processes of interest. We thus simulated sample trajectories from the model with transition densities~\eqref{e;general resource model}. The joint density of a movement trajectory $(x_1, \dots, x_N) \in \mathbb{R}^N$ of length $N\in\mathbb{N}$ is given by
\begin{equation}\label{e;trajectory probability}
p_{\text{joint}}(x_1, \dots, x_N, \bm{\theta}) = p_1(x_1, \bm{\theta}) \prod_{t=2}^{N} p_1(x_t | x_{t-1}, \bm{\theta}).
\end{equation}
Thus, we sampled successively from the transition densities to obtain a full movement trajectory. We obtained samples from the subprocess $\bm{x}_n = (x_1, x_{n+1}, \dots)$ consisting of every $n$th location by subsampling the full trajectories. These subsamples represent samples from the model with transition densities being the $n$-step densities $p_n(\cdot | \cdot,\bm{\theta})$. 

Because the models rely on environmental data, we simulated resource landscapes as realizations of Gaussian random fields with exponential covariance model \citep{Haran:2011vv,RandomFields:2013}. This resulted in spatially correlated resource landscapes, thus ensuring realism; compare Figure~\ref{f;lands} in Appendix~\ref{s;supp figures}. The sampled movement trajectories were based on these simulated landscapes. To avoid confounding effects and to keep results as clear as possible, we assumed that the weighting function was based on only one resource $r$, thus we have $w_{\bm{\theta}}(r(x))$. With the exponential covariance model, we assumed that the covariance of resource values at two different locations is given by 
\begin{equation}
\text{Cov}(r(x), r(y)) = \exp\Bigl(\frac{|x-y|}{s}\Bigr),
\end{equation}
where $s$ affects the decrease of the spatial autocorrelation with increasing distance.

We sampled trajectories for varying parameter values. We used $\sigma\in\{5,6,7\}$  and $\beta\in\{0.5,1,1.5,2\}$ in all combinations. In the model with logistic RSF $w_{\log}$, we further combined the values $\alpha\in\{-1,-0.5,0,0.5,1\}$ with all other parameters. For each parameter combination, we sampled 16 trajectories for 15,000 time steps each; compare Fig.~\ref{f;traj exp},\ref{f;traj log} in Appendix~\ref{s;supp figures}. For each of the 16 trajectories, we used a different resource landscape, repeating the same set of resource landscapes across different parameter combinations. The 16 landscapes were generated with varying spatial autocorrelation, $s$ ranging between 200--500. This led to a total of 192 sampled trajectories for the model with exponential RSF and 960 trajectories for the model with logistic RSF. We subsample every trajectory at levels $n=1, \dots, 15$, leaving 1000 steps for the coarsest time series. The subsample for $n=1$ is the original trajectory.

\subsubsection{Analyzing parameters}

While the simulated trajectories represent samples from the original model with transition densities $p_1(\cdot|\cdot,\bm{\theta})$, the subsamples of the full trajectories provide us with samples from the sub-models with $n$-step densities $p_n(\cdot|\cdot,\bm{\theta})$. To learn about the model's robustness properties, we need to test whether the subsamples reconcile with the parameter-adjusted one-step densities $p_1(\cdot|\cdot,g_n(\bm{\theta}))$ for some parameter transformation $g_n$. For a given parameter transformation, we can achieve this by analyzing the fit of the model with transitions $p_1(\cdot|\cdot,g_n(\bm{\theta}))$ with the subsamples. When $g_n$ is unknown, or when the fit for a hypothesized $g_n$ is poor, we first need to investigate the behaviour of the parameters under subsampling to see whether we can find a function $g_n$ as required by our robustness definitions.

Here, we both tested a priori expectations on the parameter transformation and searched for better alternatives. We estimated parameters for all trajectories and their subsamples using maximum likelihood optimization. The likelihood function for the full trajectories is given in equation~\eqref{e;trajectory probability}. For subsamples, we applied the same model, although we did not know whether subsamples of trajectories followed the same (parameter-adjusted) process as full trajectories. We expected parameter estimates for the full trajectories to be close to the values that we used during the simulations. We call these the `true values', although deviations in the simulations are possible, because simulated trajectories are realizations of stochastic processes. Our main interest are parameter estimates for the subsamples. To distinguish estimates from underlying true parameters, we denote the estimate with a hat, e.g. $\hat{\sigma}$. Ideally, the parameters of the subsamples should follow some function $g_n(\sigma, \alpha,\beta)$, and so should the estimates. To see whether such a function exists, we fitted non-linear regression models to the relationship of parameter estimates of subsamples and the subsampling amount $n$. For each parameter, we fitted two models. One model was more restrictive and represented a priori expectations, whereas the other model had an additional free parameter that allowed more flexibility for the parameter transformation.

The general movement kernel $k$ has one parameter, the standard deviation $\sigma$ of the Gaussian distribution. This kernel describes the general movement tendencies of the animal, and $\sigma$ influences the distance covered in each step. With increasing subsampling, the temporal resolution of the movement path becomes coarser, and we thus expected the standard deviation of the kernel to increase. Each step in a subsample is in fact the accumulated result of one or several steps in the full trajectory. If the kernel is the only force driving the movement, the linearity of the Gaussian distribution caused us to expect the standard deviation of the kernel to increase as $\sqrt{n}\sigma$; compare section~\ref{s;analytical approach}. With additional resource selection, however, there may be deviations from this behaviour.

For the resource selection parameters $\alpha$ and $\beta$, an ideal behaviour would be that they remain unaffected by the subsampling, analogously to our assumptions in section~\ref{s;analytical approach}. In our model, we assume that each step is influenced by the RSF. One of the underlying assumptions of a traditional RSF is that it gives weights to locations independently of the values of other locations, which means each location is weighted by its present resource only, without consideration of alternative locations. Therefore, resource selection parameters should be independent of the temporal resolution of the data.  However, within the spatially-explicit movement framework, resource selection always occurs in the context of the current location and the available surrounding area as defined by the general movement kernel. Therefore, a change in the movement kernel due to increased subsampling may be accompanied by a change in resource selection parameters.

We fitted the non-linear regression models to the parameter estimates separately for each parameter combination. This means that in each regression, we fitted estimates of 16 trajectories and their subsamples. Because of our previous considerations about the kernel parameter $\sigma$, we assumed a power relationship between the estimate $\hat{\sigma}$ and the subsampling amount $n$, stratified by trajectories. We chose the stratification because trajectories were simulated on different landscapes. Also, for the resource selection parameters, especially when their true values were close to zero, estimates could vary between being positive and negative. In these cases, the stratification allowed for flexibility. The model for the estimate of the $n$th subsample of trajectory $i$ is 
\begin{equation}\label{e;sigma model}
\hat{\sigma}_{i,n} = \hat{\sigma}_{i,1} \cdot n^b + \varepsilon, \qquad 1\leq n \leq 15, \quad 1\leq i \leq 16,
\end{equation}
where the error term $\epsilon$ is normally distributed with mean zero and positive standard deviation $\zeta$. The maximum likelihood estimate of $b$ should ideally be close to 0.5, however as noted above, it may deviate from this value because of resource-selection mechanisms. To test whether $b$ differs from 0.5, we used model selection via AIC between the model in equation~\eqref{e;sigma model} and the model in which we fixed $b=0.5$.
 
Model choice for the resource selection parameters was less clear. Visual inspection of the estimates, preliminary fits with varying models and inspection of residuals suggested a power law for the parameter $\beta$ as well. We thus fitted the following model,
\begin{equation}\label{e;beta model}
\hat{\beta}_{i,n} = \hat{\beta}_{i,1} \cdot n^b + \varepsilon, \qquad 1\leq n \leq 15, \quad 1\leq i \leq 16.
\end{equation}
We compared the fit of this model with the model in which we assumed that subsampling does not change the estimate by setting $b=0$.

For the intercept parameter $\alpha$ in the logistic form of the resource selection function, we chose a linear model,
\begin{equation}\label{e;alpha model}
\hat{\alpha}_{i,n} = \hat{\alpha}_{i,1} + b\,(n-1) + \varepsilon, \qquad 1\leq n \leq 15, \quad 1\leq i \leq 16.
\end{equation}
Inspection of residuals suggested that in some cases the relationship between $\hat{\alpha}$ and $n$ was non-linear. However, a power-law model or other non-linear relationships were not consistently more suitable either. Therefore we remained with the simpler, the linear, model, noting that this is a mainly illustrative analysis.

\subsubsection{Calculating approximate robustness}

To accompany the simulation analysis, we examined approximate robustness properties of the two models with exponential and logistic RSF. We focused on approximate robustness of degree 2, and we tested the ideal parameter transformations $g_2(\sigma, \beta)=(\sqrt{2}\sigma, \beta)$ and $g_2(\sigma,\alpha,\beta)=(\sqrt{2}\sigma, \alpha,\beta)$ for $w_{\exp}$ and $w_{\log}$, respectively. We numerically calculated a magnitude $\delta = \max_{x,y}(|v(x,y)-1|)$ for every possible scenario that we used in the previous section. This means that we calculated a magnitude for each combination of the parameters $\sigma$, $\beta$, and $\alpha$ (in case of the logistic RSF) and for each of the 16 simulated resource landscapes. We may therefore think of $\delta$ as $\delta(\sigma, \alpha, \beta, i)$, for $1\leq i \leq 16$; compare Fig.~\ref{f;fig2} We examined whether magnitudes were influenced by parameter values and specific characteristics of the landscapes, such as their spatial autocorrelation and their overall variation $\text{Var}(r(x))$ over the spatial domain. We further calculated an overall maximum $\max_{\sigma,\alpha,\beta,i}\delta(\sigma,\alpha,\beta,i)$. We compared results between the model with exponential RSF, $w_{\exp}$, and logistic RSF, $w_{\log}$.

%%%%%%%%%%%%%%%%%%%%%%%%%%%%%%%%%%%%%%%%%%%%%%%%%%
%%%%%%%%%%%%%%%%%%%%%%%%%%%%%%%%%%%%%%%%%%%%%%%%%%
\section{Results}
\label{s;results}
%%%%%%%%%%%%%%%%%%%%%%%%%%%%%%%%%%%%%%%%%%%%%%%%%%
%%%%%%%%%%%%%%%%%%%%%%%%%%%%%%%%%%%%%%%%%%%%%%%%%%

\subsection{Analytical and numerical results}
\label{s;exact results}

We found few special cases of weighting functions $w_{\bm{\theta}}$ that, together with the Gaussian kernel $k_{\sigma}$, resulted in a robust movement model according to Definition~\ref{t;definition robustness}.

The simplest case was a constant weighting function. Such a weighting function reduces equation~\eqref{e;transition density} to the case of a homogeneous environment, where only general movement tendencies play a role, but no environmental information. The model is then a simple random walk with normally distributed steps between locations. Because of the linearity of the normal distribution, the model is robust of degree $n$ for all $n\in\mathbb{N}$ for the assumed parameter transformation $g_n(\sigma) = \sqrt{n}\sigma$; compare also Theorem~\ref{t;exp w} for parameters $a=b=0$.

A natural next step was to consider a linear weighting function. However, a linear weighting function violates the assumption of being strictly positive everywhere. If in equation~\eqref{e;transition density} the current location $y$ is the point at which $w$ becomes zero, the normalization integral vanishes. Also, equation~\eqref{e;transition density} can become negative and thus cease to be a valid density function. Still, we could draw on the linearity of the expectation of a random variable to look into this further. The normalization constant in the transition density~\eqref{e;transition density} can be viewed as an expectation of the form $E(w(Z))$ for a normally distributed random variable $Z$ with mean $y$. Therefore, if the function $w$ is linear, the normalization constant reduces to $w(y)$. Equation~\eqref{e;transition density} then becomes
\begin{equation} 
p_1(x|y, \sigma, \bm{\theta})  = k_{\sigma}(x;y) 
\frac{w_{\bm{\theta}}(x)}{w_{\bm{\theta}}(y) }.
\end{equation}
The right-hand side of the equation is positive whenever $x$ and $y$ are either both negative or both positive. If movement only occurs in the domain where the weighting function is positive the model is robustness within this domain. The details of the proof can be found in Appendix~\ref{s;proofs exact}.

\begin{theorem}[Linear weighting function]
\label{t;linear w}
Let $w$ be a linear function $w(x) = ax+b$, for $a, b\in \mathbb{R}$. Let $\mathcal{I}\subset\mathbb{R}$ be the interval where $w>0$. For the restricted domain $\mathcal{I}$, the movement model with transition densities~\eqref{e;transition density} is robust of degree $n$ for all $n\in\mathbb{N}$. The parameter transformation is given by $g_n(\sigma, a, b) = (\sqrt{n}\sigma, a, b)$.
\end{theorem}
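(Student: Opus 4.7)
The plan is to exploit two key facts: (i) a linear function of a Gaussian has expectation equal to the linear function evaluated at the mean, which collapses the normalizing constant, and (ii) the convolution of $n$ independent Gaussian kernels of variance $\sigma^2$ is a Gaussian kernel of variance $n\sigma^2$. Combined with the Chapman--Kolmogorov formula~\eqref{e;n-step density}, these will make the weights telescope and leave a pure Gaussian convolution, which is exactly what robustness with $g_n(\sigma,a,b)=(\sqrt{n}\sigma,a,b)$ requires.

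First I would rewrite the one-step density on $\mathcal{I}$. Since $Z\sim N(y,\sigma^2)$ satisfies $E[aZ+b]=ay+b=w(y)$, the denominator in~\eqref{e;transition density} reduces to $w(y)$, giving
\begin{equation*}
p_1(x|y,\sigma,a,b)=k_{\sigma}(x;y)\,\frac{w(x)}{w(y)},
\end{equation*}
which is a valid density on $\mathcal{I}$ provided the process stays in $\mathcal{I}$ (so that $w(y)>0$). Next I would substitute this form into the product in~\eqref{e;n-step density}. Writing $x_0=\bm{x}_{t-n\tau}$, $x_n=\bm{x}_t$, and $x_k=\bm{x}_{t-(n-k)\tau}$ for intermediate indices, the integrand becomes
\begin{equation*}
\prod_{k=1}^{n} k_{\sigma}(x_k;x_{k-1})\,\frac{w(x_k)}{w(x_{k-1})}
=\frac{w(x_n)}{w(x_0)}\,\prod_{k=1}^{n} k_{\sigma}(x_k;x_{k-1}),
\end{equation*}
where the weight ratios telescope. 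The factor $w(x_n)/w(x_0)$ is independent of the intermediate variables and pulls out of the integral.

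The remaining $n-1$ integrations are precisely the iterated convolution of $n$ Gaussian densities with common variance $\sigma^{2}$. Using the standard fact that such a convolution is Gaussian with mean $x_0$ and variance $n\sigma^{2}$, the integral evaluates to $k_{\sqrt{n}\sigma}(x_n;x_0)$. Putting everything together,
\begin{equation*}
p_{t-n\tau,t}(x|y,\sigma,a,b)=k_{\sqrt{n}\sigma}(x;y)\,\frac{w(x)}{w(y)}=p_{t-\tau,t}\!\left(x\,\big|\,y,\sqrt{n}\sigma,a,b\right),
\end{equation*}
which is~\eqref{e;robustness} with $g_n(\sigma,a,b)=(\sqrt{n}\sigma,a,b)$. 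Injectivity of $g_n$ is immediate because $\sigma\mapsto\sqrt{n}\sigma$ is injective on $\mathbb{R}^+$ and the other coordinates are fixed.

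\paragraph{Main obstacle.}
The algebra above is essentially a telescoping exercise, so the genuinely delicate step is the domain restriction. Because the Gaussian kernel has full support on $\mathbb{R}$, nothing in the convolution integral prevents intermediate points $x_1,\dots,x_{n-1}$ from leaving $\mathcal{I}$, and on the complement of $\mathcal{I}$ the ``density'' $k_{\sigma}(x_k;x_{k-1})w(x_k)/w(x_{k-1})$ fails to be nonnegative. I would handle this by treating the computation as a purely algebraic identity between integrable functions on $\mathcal{I}$: since $w$ and $k_{\sqrt{n}\sigma}(\cdot;y)w(\cdot)/w(y)$ are positive on $\mathcal{I}$, the identity derived above defines a legitimate transition density of the restricted process, and the telescoping still holds because it is an identity of integrands pointwise. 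The caveat that movement must remain in $\mathcal{I}$ is therefore a hypothesis of the theorem rather than a conclusion, consistent with the statement.
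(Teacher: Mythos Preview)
Your proof is correct and uses the same two ingredients as the paper's argument: the identity $\int k_{\sigma}(y;x)w(y)\,\D y=w(x)$ for linear $w$, and the closure of Gaussians under convolution. The only structural difference is that the paper proceeds by induction on $n$ (reducing $p_{n+1}$ to an integral of $p_1\cdot p_n$), whereas you attack the full $n$-fold product directly and let the weight ratios telescope; your route is marginally more direct but otherwise equivalent, and your handling of the domain restriction matches the paper's caveat that all steps are assumed to stay in~$\mathcal{I}$.
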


We found another special case to be given by an exponential weighting function. Here, no restriction on the domain is necessary. Again, see Appendix~\ref{s;proofs exact} for details of the proof. 

\begin{theorem}[Exponential weighting function]
\label{t;exp w}
Let $w$ be an exponential function of the form $w(x) = Ce^{ax+b}$ for $C, a, b \in \mathbb{R}$.  Then the movement model with transition densities~\eqref{e;transition density} is robust of degree $n$ for all $n\in\mathbb{N}$ with parameter transformation $g_n(\sigma, C, a, b) = (\sqrt{n}\sigma, C, a, b)$.
\end{theorem}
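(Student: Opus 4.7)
The plan is to reduce the one-step transition density to a Gaussian in closed form by exploiting the moment generating function of the normal distribution, and then use the convolution property of Gaussians to obtain the $n$-step density in the same closed form.

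First I would compute the normalization constant in equation~\eqref{e;transition density}. Because $Z\sim N(y,\sigma^2)$ implies $E[e^{aZ}]=\exp(ay+a^2\sigma^2/2)$, the integral $\int_{\mathbb{R}} k_\sigma(z;y)\,C e^{az+b}\,\D z$ equals $C e^{b}\exp(ay+a^2\sigma^2/2)$. Dividing $k_\sigma(x;y)w(x)$ by this expression, the multiplicative constants $Ce^b$ cancel, and completing the square in the exponent produces
\begin{equation*}
p_1(x\mid y,\sigma,C,a,b)=\frac{1}{\sqrt{2\pi}\,\sigma}\exp\!\left(-\frac{(x-y-a\sigma^2)^2}{2\sigma^2}\right).
\end{equation*}
Thus the one-step transition is itself a Gaussian density with mean $y+a\sigma^2$ and standard deviation $\sigma$, i.e.\ a biased random walk whose drift is determined by $a$ and $\sigma$.

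Next I would use equation~\eqref{e;n-step density} and recognize that the $n$-step density is the $n$-fold convolution of these shifted Gaussians (viewed as densities of the increments $\bm{x}_{t-(k-1)\tau}-\bm{x}_{t-k\tau}$). Since the convolution of independent Gaussians adds means and variances, the $n$-step transition is Gaussian with mean $y+na\sigma^2$ and variance $n\sigma^2$,
\begin{equation*}
p_n(x\mid y,\sigma,C,a,b)=\frac{1}{\sqrt{2\pi n}\,\sigma}\exp\!\left(-\frac{(x-y-na\sigma^2)^2}{2n\sigma^2}\right).
\end{equation*}

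Finally I would verify Definition~\ref{t;definition robustness} with $g_n(\sigma,C,a,b)=(\sqrt{n}\sigma,C,a,b)$. Substituting $\sqrt{n}\sigma$ for $\sigma$ in the closed form of $p_1$ gives mean $y+a(\sqrt{n}\sigma)^2=y+na\sigma^2$ and standard deviation $\sqrt{n}\sigma$, matching $p_n$ exactly. Injectivity of $g_n$ is immediate since it acts as the identity on $(C,a,b)$ and as the strictly increasing map $\sigma\mapsto\sqrt{n}\sigma$ on $\sigma>0$. I do not anticipate a genuine obstacle here: the only nontrivial step is the Gaussian integration/convolution, and the fact that an exponential reweighting of a Gaussian is again Gaussian (merely with a shifted mean) is precisely what makes this family closed under subsampling.
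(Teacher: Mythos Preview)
Your proof is correct and rests on the same two computations as the paper's---evaluating the normalizing integral via the Gaussian moment generating function and using closure of Gaussians under convolution---but you organize them more efficiently. The paper keeps the transition density in the form ``kernel times $w$ over normalization'' and proceeds by induction on $n$, at each step cancelling the $w$-factors against the explicitly computed normalizing constants before convolving. You instead complete the square once to recognize that $p_1(\cdot\mid y)$ is itself a Gaussian $N(y+a\sigma^2,\sigma^2)$, so the process is a biased random walk with i.i.d.\ Gaussian increments and the $n$-step law follows immediately without induction. Your route is shorter and makes transparent \emph{why} the exponential weighting is special: an exponential tilt of a Gaussian merely shifts the mean, preserving the Gaussian family. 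The paper's inductive argument, by contrast, stays closer to the general template used for the linear-$w$ theorem and so parallels that proof more visibly.
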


The above two Theorems show that it is possible to verify exact robustness with the ideal parameter transformation $g_n(\sigma, \bm{\theta}) = (\sqrt{n}\sigma,\bm{\theta})$ for certain weighting functions. However, the cases are very restrictive, and robustness will fail for many other, and especially more complex, weighting functions.

We could additionally establish asymptotic robustness for more general conditions on the weighting function. The main result is summarized in the following theorem. For a detailed proof of the theorem, see Appendix~\ref{s;proof asymptotic}.

\begin{theorem}[Asymptotic robustness of degree 2]
\label{t;asymptotic robustness}
Let $w_{\bm{\theta}}$ be continuous and bounded away from zero. Let $w_{\bm{\theta}}$ further be twice differentiable with bounded second derivative. Then the model with transition densities~\eqref{e;transition density}  is asymptotically robust of degree 2 with parameter transformation $g_2(\sigma, \bm{\theta}) = (\sqrt{2}\sigma,\bm{\theta})$.
\end{theorem}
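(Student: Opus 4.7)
The plan is to verify Definition~\ref{t;definition asymptotic} directly by computing the two-step density via the Chapman--Kolmogorov equation and factoring it into the shape of a parameter-adjusted one-step density times a correction of order $\tau$. The essential algebraic tool will be the standard Gaussian product identity
\begin{equation*}
k_{\sigma}(x;z)\,k_{\sigma}(z;y) \;=\; k_{\sqrt{2}\sigma}(x;y)\,k_{\sigma/\sqrt{2}}\!\bigl(z;\tfrac{x+y}{2}\bigr),
\end{equation*}
which arises from completing the square in the exponent and which is the natural reason to expect the transformation $g_2(\sigma,\bm{\theta})=(\sqrt{2}\sigma,\bm{\theta})$.

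First, I would write
\begin{equation*}
p_2(x|y)\;=\;\int p_1(x|z)\,p_1(z|y)\,\D z \;=\; \frac{w_{\bm{\theta}}(x)\,k_{\sqrt{2}\sigma}(x;y)}{\int k_{\sigma}(u;y)w_{\bm{\theta}}(u)\,\D u}\cdot I(x,y),
\end{equation*}
where, after pulling out the denominator of the second factor (which does not depend on $z$) and applying the Gaussian identity above,
\begin{equation*}
I(x,y)\;=\;\int \frac{k_{\sigma/\sqrt{2}}\!\bigl(z;\tfrac{x+y}{2}\bigr)\,w_{\bm{\theta}}(z)}{\int k_{\sigma}(u;z)w_{\bm{\theta}}(u)\,\D u}\,\D z.
\end{equation*}
The appearance of the narrow kernel $k_{\sigma/\sqrt{2}}$, whose variance vanishes with $\tau$ through $\sigma(\tau)=\sqrt{\tau}\omega$, is what makes $I(x,y)$ close to $1$.

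Next, I would use the assumption that $w_{\bm{\theta}}$ is twice continuously differentiable with bounded second derivative and is bounded away from zero. A second-order Taylor expansion of $w_{\bm{\theta}}$ under the Gaussian integral, together with the vanishing of odd Gaussian moments, gives uniformly in $z$ the identity
\begin{equation*}
\int k_{\sigma}(u;z)\,w_{\bm{\theta}}(u)\,\D u \;=\; w_{\bm{\theta}}(z) + \tfrac{\sigma^{2}}{2}\,w_{\bm{\theta}}''(\xi(z))\;=\;w_{\bm{\theta}}(z)\bigl(1+\mathcal{O}(\sigma^{2})\bigr),
\end{equation*}
where the implied constant depends only on the uniform bounds on $w_{\bm{\theta}}''$ and on $1/w_{\bm{\theta}}$. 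Substituting this into the integrand of $I(x,y)$ cancels the factor $w_{\bm{\theta}}(z)$ against its leading-order contribution, leaving $I(x,y)=\int k_{\sigma/\sqrt{2}}(z;\tfrac{x+y}{2})(1+\mathcal{O}(\sigma^{2}))\,\D z = 1+\mathcal{O}(\sigma^{2})$.

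Finally, forming the ratio with $p_{1}(x|y,\sqrt{2}\sigma,\bm{\theta})$ produces
\begin{equation*}
v(x,y;\tau)\;=\;\frac{p_2(x|y,\sigma,\bm{\theta})}{p_1(x|y,\sqrt{2}\sigma,\bm{\theta})}\;=\;\frac{\int k_{\sqrt{2}\sigma}(u;y)w_{\bm{\theta}}(u)\,\D u}{\int k_{\sigma}(u;y)w_{\bm{\theta}}(u)\,\D u}\,\bigl(1+\mathcal{O}(\sigma^{2})\bigr),
\end{equation*}
and the same Taylor-expansion argument applied separately to numerator and denominator reduces this ratio to $1+\mathcal{O}(\sigma^{2})$. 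Since $\sigma^{2}=\omega^{2}\tau$, this is exactly $v(x,y;\tau)-1=\mathcal{O}(\tau)$. The main obstacle is not the algebra but proving that the error term is truly uniform in $x,y$ (and, ideally, in $\bm{\theta}$ over compact sets); this is precisely where the hypotheses that $w_{\bm{\theta}}$ is bounded away from zero and has bounded second derivative are used, since both the Taylor remainder and the multiplicative error $w_{\bm{\theta}}''/w_{\bm{\theta}}$ must be controlled globally rather than just pointwise.
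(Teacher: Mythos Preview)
Your proposal is correct and follows essentially the same route as the paper's proof: the same Gaussian product identity, the same factorization $v = Q\cdot I$ (in the paper's notation), and the same second-order Taylor control of $\int k_{\sigma}(u;z)\,w_{\bm{\theta}}(u)\,\D u$ against $w_{\bm{\theta}}(z)$. The paper makes the uniformity explicit by splitting into the regimes $\tau<\tau_0$ and $\tau\ge\tau_0$, which is precisely the issue you flag at the end; your use of the global lower bound $\int k_{\sigma}w_{\bm{\theta}}\ge L$ in the denominators in fact handles it without the split.
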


Thus, if the weighting function is well-behaved according to the theorem, we can place a bound on the factor by which the one- and two-step density vary; compare equation~\eqref{e;asymptotic ratio}. This bound is of order $\tau$, such that the discrepancy between one- and two-step density decreases with the time interval.

\begin{example}[Asymptotic robustness of degree 2]
As a simple example, consider the weighting function $w(x) = \sin(\alpha x) + \beta$ for $\alpha>0$ and $\beta>1$. The choice of $\beta$ guarantees that the weighting function is positive everywhere. The function $w$ is bounded between $0 < \beta-1 \leq w(x) \leq \beta+1$ for all $x\in\mathbb{R}$, and its second derivative is bounded by $|w''(x)| = \alpha^2$. Therefore, Theorem~\ref{t;asymptotic robustness} holds. 
\end{example}

The proof of Theorem~\ref{t;asymptotic robustness} is constructive in the sense that it provides us with a constant $C$ for equation~\eqref{e;asymptotic ratio} in terms of the bounds on $w$ and $w''$. However, this constant may be rather large and does not necessarily provide the closest bound on the function~$v$. Therefore, it can be informative to calculate approximate robustness numerically.

\begin{example}[Approximate robustness of degree 2]
\label{t;example approximate}
We continue the above example with weighting function $w(x) = \sin(\alpha x) + \beta$ for $\alpha>0$ and $\beta>1$. We calculated the function $v(x,y;\sigma, 
\alpha,\beta)$ from Definition~\ref{t;definition approximate} numerically, using different values of $\alpha$ and $\beta$ (Fig.~\ref{f;fig3}a).
From this, we obtained $\delta_{\alpha,\beta}(\sigma)$ (Fig.~\ref{f;fig3}b), which is the magnitude of approximate robustness (degree 2) for the model with specific weighting function (i.e.\ with specific parameters); compare Fig.~\ref{f;fig2}. In each case, after reaching a maximum the function vanishes for increasing $\sigma$. Therefore it appears that we can find $\delta_{\alpha,\beta} := \max_{\sigma} \delta_{\alpha,\beta}(\sigma)$. The wavelength of the sine curve, determined by $\alpha$, and the intercept $\beta$ have different effects on the function $\delta_{\alpha,\beta}(\sigma)$. While $\alpha$ shifts the curve, $\beta$ changes the height of the peak (Fig.~\ref{f;fig3}b). Therefore, it appears that $\delta_{\alpha, \beta}$ is independent of $\alpha$ and decreases for larger $\beta$. For the weighting function to be positive, $\beta$ needs to be larger than one. For $\beta=1$, the function $\delta_{\alpha, \beta}$ has a maximum at one. From these considerations, we can conclude that $\max_{\alpha,\beta} \delta_{\alpha,\beta} = 1$. This is the overall magnitude of approximate robustness (degree 2) for the family of weighting functions $w(x) = \sin(\alpha x) + \beta$, $\alpha>0$, $\beta>1$; compare Fig.~\ref{f;fig2} As a word of caution, we note that we only calculated $\delta_{\alpha,\beta}$ for a fixed number of parameter values and only within finite intervals for $x$ and $y$, and therefore results may be limited to these ranges.
\end{example}

In the region where $\delta(\sigma)$ peaks, the approximation of the parameter-adjusted one-step density $p_1(x | y, \sqrt{2}\sigma, \alpha, \beta)$ to the actual two-step density $p_2(x | y, \sigma, \alpha, \beta)$ is only rough. However, for larger values of $\sigma$, and independent of $\alpha$ and $\beta$, the function $\delta_{\alpha,\beta}(\sigma)$ seems to vanish, which means that the approximation is good and the discrepancy between two- and one-step densities may be neglected. From Theorem~\ref{t;asymptotic robustness}, we would have been able to conclude that $\delta_{\alpha,\beta}(\sigma(\tau))$ is bounded by $C\tau$, for a constant $C>0$, for all $\alpha>0$ and $\beta>1$. As we can see from the steep initial slope of $\delta_{\alpha,\beta}(\sigma)$, especially for higher values of $\alpha$, the constant $C$ would need to be rather large (Fig.~\ref{f;fig3}b). The calculations of approximate robustness could additionally show that the bound on $v(x,y)$ is in fact much smaller.

\subsection{Simulation results}
\label{s;simulation results}

\subsubsection{Results for parameter estimates}

When analyzing parameter estimates from the simulated trajectories and their subsamples, we found a difference in the behaviour of parameters between the exponential and the logistic form of the RSF. Generally, subsampling had less effect on the value of parameter estimates using the logistic form, and the behaviour of estimates agreed closer with our expectations.

For both RSF, estimates $\hat{\sigma}$ showed a good fit with the power-law model. When we used the exponential RSF, the estimated power $b$ ranged from 0.45 to 0.5 for varying parameter combinations, thus deviating from expected behaviour for some parameter combinations (Fig.~\ref{f;fig4}a). For small selection parameter $\beta$, the estimate $\hat{\sigma}$ showed the expected increase as $\hat{\sigma}\sqrt{n}$. With increasingly strong selection, i.e.\ higher value of $\beta$, estimates $\hat{\sigma}$ became smaller with increased subsampling relative to the ideal relationship. An increase in $\sigma$ did not influence the fit other than leading to a larger residual standard error $\hat{\zeta}$, which is to be expected because of the overall larger values of the dependant variable. In contrast, when using the logistic RSF, the estimated power $b$ differed only very slightly from 0.5 and in some cases, the simpler model with fixed $b$ was preferred by model selection right away (Fig.~\ref{f;fig4}b). 

The behaviour of the resource-selection parameter $\beta$ also differed between exponential and logistic RSF. For the exponential RSF, $\hat{\beta}$ showed a clear increase with increased subsampling, fitted well by our power-law model (Fig.~\ref{f;fig5}a). The power $b$ remained similar (ranging 0.105--0.124) across parameter combinations, increasing slightly with larger $\sigma$ (Fig.~\ref{f;fig5}b). For the logistic RSF, estimates $\hat{\beta}$ generally remained closer to the original values for $n=1$ (Fig.~\ref{f;fig5}c,d). In most cases, model selection via AIC preferred the power-law model to the ideal constant relationship, however, the estimated values of the power $b$ are small, with 53 out of 60 values being below 0.1 (total range 0--0.156, with one exceptional negative value $b=-0.041$). There was a tendency of $b$ to be smaller and more concentrated under stronger selection (Fig.~\ref{f;fig5}d).

Estimates of the intercept $\alpha$ in the logistic RSF showed a slight decline with increased subsampling in most cases (Fig.~\ref{f;fig6}). This decreasing trend existed no matter whether $\alpha$ was positive, negative, or zero. In general, slopes of the linear fit were all close to zero (ranging -0.047--0.058), and in a few cases the null model with $b=0$ was chosen. We found a trend in the realized intercept values in the simulated trajectories. With stronger effect of selection (larger $\beta$), the intercept estimate $\hat{\alpha}$ of original trajectories ($n=1$) was stronger concentrated around the true underlying value, which subsequently lead to a stronger concentration of estimates of subsamples (Fig.~\ref{f;fig6}).

\subsubsection{Results about approximate robustness}

When comparing magnitudes $\delta(\sigma,\alpha,\beta,i)$ of approximate robustness (degree 2) between the two models with exponential and logistic RSF, we found lower magnitudes for the model with logistic function $w_{\log}$. Magnitudes for the model with exponential RSF ranged between 0.067 and 1.82, whereas those for the model with logistic RSF ranged between 0.02 and 1.19. The 5\% quantile, the median and the 0.95\% quantile were $[0.092, 0.34, 0.97]$ (exponential RSF) and $[0.046, 0.21, 0.64]$ (logistic RSF).

We found that especially the selection parameter $\beta$ had a strong influence on magnitudes, higher values of $\beta$ leading to higher magnitudes (Fig.~\ref{f;fig7}). For the model with exponential RSF, there was a tendency that weighting functions whose underlying landscapes had higher variation $\text{Var}(r(x))$ lead to smaller magnitudes (Fig.~\ref{f;fig7}a). However, we did not find an effect of the parameter $s$ that was used in the simulations to influence the spatial autocorrelation of the landscapes. The model with logistic RSF did not show such an effect of landscape variation. The logistic model had the additional intercept parameter $\alpha$. We found that higher values of $\alpha$ tended to result in lower magnitudes (Fig.~\ref{f;fig7}b).

%%%%%%%%%%%%%%%%%%%%%%%%%%%%%%%%%%%%%%%%
\section{Discussion}
\label{s;discussion}
%%%%%%%%%%%%%%%%%%%%%%%%%%%%%%%%%%%%%%%%

We have proposed a new rigorous framework for analyzing movement models' capacities to compensate for varying temporal discretization of data. Our framework comprises three definitions of varying strength for robustness of discrete-time movement models. Generally, if a model is robust, it can overcome problems of mismatching temporal scales between different data sets or between data and biological questions. Because our robustness is a very strong condition that holds only for very few and generally more simple models, we  have introduced the additional concepts of asymptotic and, most importantly, approximate robustness. While for many movement models it is difficult, or even impossible, to examine the transition densities and their marginals analytically, approximate robustness properties of a model can be calculated numerically also for analytically intractable models. Therefore, we believe that especially approximate robustness will prove a useful new concept for movement analyses.

We have formulated our robustness definitions in terms of the transition densities of Markov models,  because these models are often fitted to movement data with likelihood-based methods of statistical inference. For the considered models, we can obtain the likelihood function by multiplying the transition densities of subsequent steps. If a model is robust, the transition densities keep their functional form across varying temporal scales, and parameters are transformed via a well-defined function $g_n$. The likelihood function therefore remains the same but will yield different parameter estimates. However, if the function parameter transformation is known, estimates from one scale can be translated to estimates at another scales. If a model is only approximately robust, the likelihood function will not remain exactly but at least approximately the same under a change of scale. Depending on the magnitude of the approximate robustness, the approximation of the likelihood function may be sufficiently good to allow parameter estimates to be reasonably comparable for different scales, especially if the difference in scales is small.

Our concept of robustness for discrete-time movement models is related to the formal concept of robustness in statistics. Generally speaking, robust methods in statistics acknowledge that models are approximations to reality and seek to protect outcomes of statistical procedures (e.g. hypothesis testing, estimation) against deviations from the underlying model assumptions. Classic examples are the arithmetic mean and median as estimates of a population mean: while the median is robust against outliers the mean is not \citep[e.g.][]{Hampel:1986vi}. Often, robustness is viewed in the context of deviations from assumed probability distributions \citep[distributional robustness; e.g.][]{Huber:2009vi}. For example, data may be contaminated by few observations with heavier tailed distribution than the majority of the observations. In regression analyses, robustness may also relate to the homoscedasticity assumption or the functional form of the response function \citep{Wiens:2000cq,Wilcox:2012wh}. Additionally, robustness has been considered when the assumption of independence is violated and instead observations are correlated \citep{Hampel:1986vi,Wiens:1996dm}. In our paper, we consider robustness in the context of discrete-time movement models with respect to assumptions about the temporal discretization. In view of statistical robustness, we study violations against the assumption that the temporal resolution of our movement model, a stochastic process, matches the resolution of the data, when in fact the data is only a subsample of the assumed process.

There is also a difference between our robustness of movement models and the well-established robustness in statistics. In our framework, robustness is a direct property of a model. In contrast, classical robustness in statistics is defined for objects such as estimators, test-statistics, or more generally, functionals (real-valued functions of distributions) \citep{Hampel:1971gy,Hampel:1986vi}. For the type of models we have considered here, parameter estimates cannot be obtained analytically but through numerical optimization of the likelihood function. The likelihood function is build by the model's transition densities, and thus we have defined robustness at a very basic level. A possibility for future research is to investigate whether some of the formal concepts of statistical robustness can be applied to our framework to add further insight. With our paper, we provide a new perspective for studying effects of temporal discretization of movement processes, and we hope to encourage further research.

Our analytical investigations indicate that robustness is a rare property among movement models, especially when behavioural mechanisms such as resource selection are added. Therefore, if we apply models to data without considering this issue, we are in danger of misinterpreting results and drawing erroneous conclusions. However, our analysis also shows positive prospects with respect to approximate robustness. Theorem~\ref{t;linear w} suggests that in slowly varying environments that produce locally linear weighting functions we may find some robustness. Theorem~\ref{t;asymptotic robustness} and the following examples show that certain smoothness and boundedness conditions on the weighting function can lead to approximate robustness. In addition, Example~\ref{t;example approximate} further demonstrates that approximate robustness can be investigated numerically on a case-by-case basis. We have illustrated this with a smooth weighting function $w(x)$ that is a direct function of space. In data applications, an animal's preferences for locations usually do not depend on space per se but rather through the type of habitat and available resources, and the weighting function will be less regular. In our simulation study, we have therefore presented a case with a more realistic model.

While it is difficult to analyze the transition densities and resulting $n$-step densities with analytical calculations, we have demonstrated with the simulation approach how we can still investigate robustness properties of complex models. Sampling from probability distributions instead of calculating them is the key idea of Monte Carlo methods. We have used this method to examine sub-models that have the $n$-step densities as transition densities. With this we obtained the parameter transformation $g_n$. Our approach differs from previous studies that have used subsamples of fine-scale data to establish an empirical relationship between sampling interval and movement characteristics \citep{Pepin:2004bh,Ryan:2004et,Rowcliffe:2012hp}. When using data, it can be difficult to tease apart effects that result from the methodology and effects of actual behavioural changes at different scales. Analyzing model properties as we have proposed here allows us to identify those effects of temporal discretization that are attributable to the methodology.

In our demonstration of the simulation approach, we analyzed spatially-explicit resource selection models. These models have an advantage over traditional resource-selection and step-selection functions. In the traditional, regression-type approach, observed movement steps are compared to potential steps that are obtained separately from an empirical movement kernel \citep{Fortin:2005uk,Forester:2009wa}. In this approach, movement and resource-selection are treated independently, although it is highly likely that both influence each other. In contrast, when fitting the full random walk with resource selection to data by using the likelihood function~\eqref{e;trajectory probability}, we can simultaneously estimate parameters both of the general movement kernel and the weighting function, that is the RSF.

In our analysis of the resource-selection model, we observed systematic trends in values of parameter estimates with changing temporal discretization of movement trajectories. The main purpose was not to analyze these relationships in full detail but to illustrate that they occur and thus must not be neglected. Comparing the exponential and logistic form of the spatially-explicit resource selection model, we found that estimates varied more with increased subsampling when the exponential RSF was used, compared to the logistic RSF. Using the exponential RSF, estimates of the kernel standard deviation $\sigma$ decreased with increased subsampling compared to the ideal relationship $\sqrt{n}\sigma$. On the other hand, using the logistic RSF, $\sigma$ followed the ideal relationship that would occur for a purely Gaussian process very closely, even under additional influence of resource selection. The estimated strength of resource selection, indicated by $\beta$, increased with the subsampling amount. While this effect was strongly pronounced for the model with exponential RSF, it was only weak for the logistic RSF. Therefore, if using the logistic RSF, one may expect to obtain similar inference results across varying temporal discretization.

When we calculated the magnitudes of approximate robustness for the models used in the simulations, we found that those were in line with the results for the parameter estimates. Overall, the model with logistic RSF had better robustness properties than the model with exponential RSF. We also found a matching trend for the movement parameter $\sigma$ with varying true values of $\beta$. Estimates of $\sigma$ were closer to the expected behaviour for weaker resource-selection parameters. This was also reflected in magnitudes of approximate robustness. If selection was weaker in the original model, the model exhibited better robustness properties. These results suggest that numerical calculations of approximate robustness can assist our expectations about changes in parameter estimates. On the other hand, although parameter estimates of the weighting function showed a clear difference in behaviour when comparing between the exponential and logistic RSF, differences within one model between different parameter combinations were less clear. More  analyses would be required to entangle more detailed effects.

Overall, the results from the simulations suggest that depending on the resolution of movement data, we may misinterpret animals' movement tendencies and also may overestimate resource selection effects. It is therefore important that we are aware of the changes to statistical inference that can arise merely from the methodology. Here, we have seen that changes in inference results were stronger for the resource selection model with exponential RSF compared to the logistic RSF. A possible explanation may be the additional intercept in the logistic RSF. With increased subsampling, estimates of $\alpha$ tended to decrease, possibly counteracting the increase in estimates $\hat{\beta}$. This could have led to more stability for the parameter $\sigma$ of the general movement kernel. However, this may not explain why resource selection parameters generally varied less themselves compared to the exponential RSF. Another possibility is that the different form of the RSFs causes their different behaviour. While the exponential form of the RSF greatly enhances differences in landscape values, the logistic RSF is restricted to values in the interval $(0,1)$. Theorem~\ref{t;asymptotic robustness} suggests that variation in the rate of change of the weighting function influences robustness properties. Thus the logistic RSF may produce more stable inference results for varying temporal resolutions. \cite{Lele:2006wc} suggested several alternatives to the exponential RSF. Our study case showed that the choice of resource selection functions can have implications for statistical inference and we encourage to choose resource selection functions more deliberately.

With our study we have illustrated that the concept of the parameter transformation $g_n$ can help to bridge the gap between different temporal resolutions of data. In the model with exponential RSF, we found that with increased subsampling estimates of the resource selection parameter $\beta$ deviated strongly from the original values. However, the increase in $\hat{\beta}$ could be fitted with a power-relationship. Thus, using the idea of Monte Carlo sampling, we were able to obtain a parameter transformation $g_n$. Using such transformations when comparing results obtained from data with different temporal resolutions could greatly improve our statistical inference, leading to a better understanding of movement behaviour.

%%%%%%%%%%%%%%%%%%%%%%%%%%%%%%%%%%%%%%%%%%%%
% FIGURES %
%%%%%%%%%%%%%%%%%%%%%%%%%%%%%%%%%%%%%%%%%%%%

\clearpage

% illustration 2-step density
\begin{figure}
\includegraphics{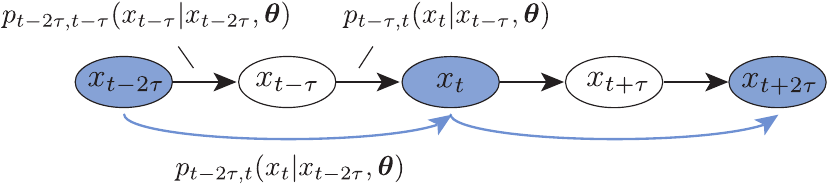}
\caption{The second sub-model consists of every second location. The transition densities of the sub-model, which we refer to as 2-step densities, are the marginals over the two intermediate one-step densities of the original model}
\label{f;fig1}
\end{figure}

% flowchart to calculate approx robustness
\begin{figure}
\includegraphics{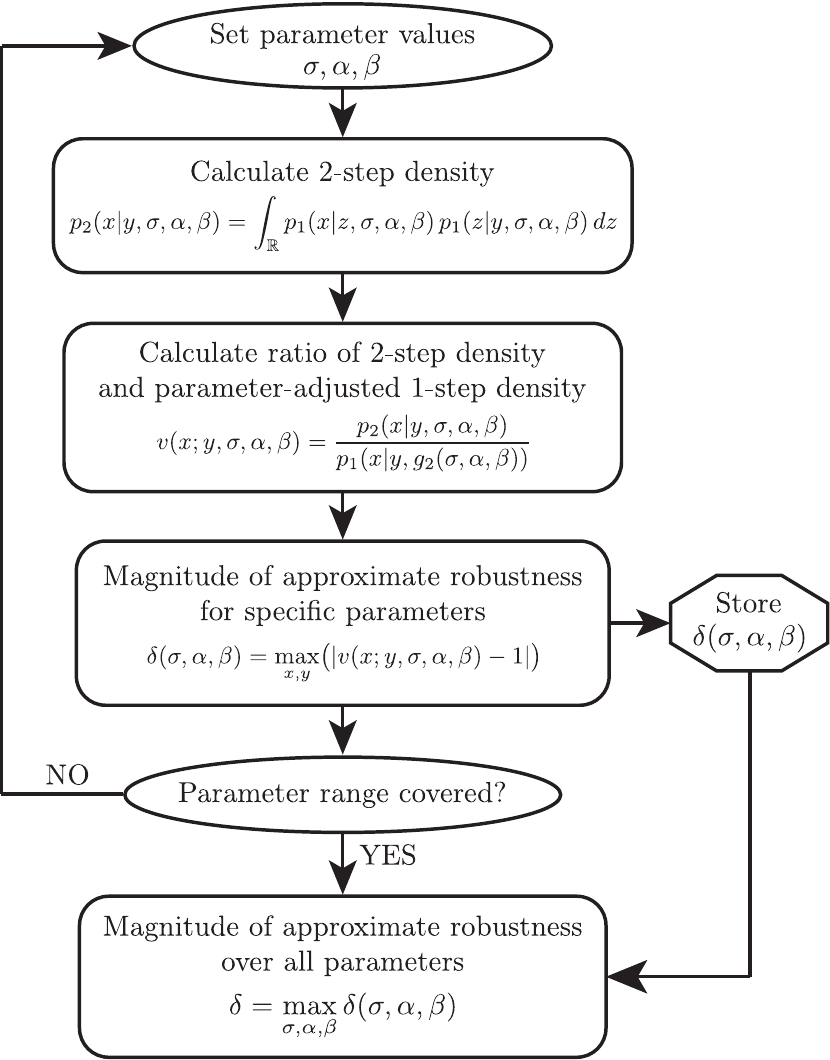}
\caption{Steps for calculating the magnitude of approximate robustness of degree 2 for a given model, where $\sigma$ is the parameter of the movement kernel, and $\alpha$ and $\beta$ are parameters of the weighting function. The one-step density $p_1$ can, for example, be equation~\eqref{e;transition density} with the weighting function from Example~\ref{t;example approximate}, or the resource selection model \eqref{e;general resource model} with weighting function~\eqref{e;exponential rsf}  or \eqref{e;logistic rsf}. When the resource selection model is used, the flowchart shows the calculation of the magnitude for one specific resource landscape $r(x)$. When calculating an overall magnitude, practically we do this for a subset of the parameter space}
\label{f;fig2}
\end{figure}

% approx robustness example sine function
\begin{figure}
\includegraphics{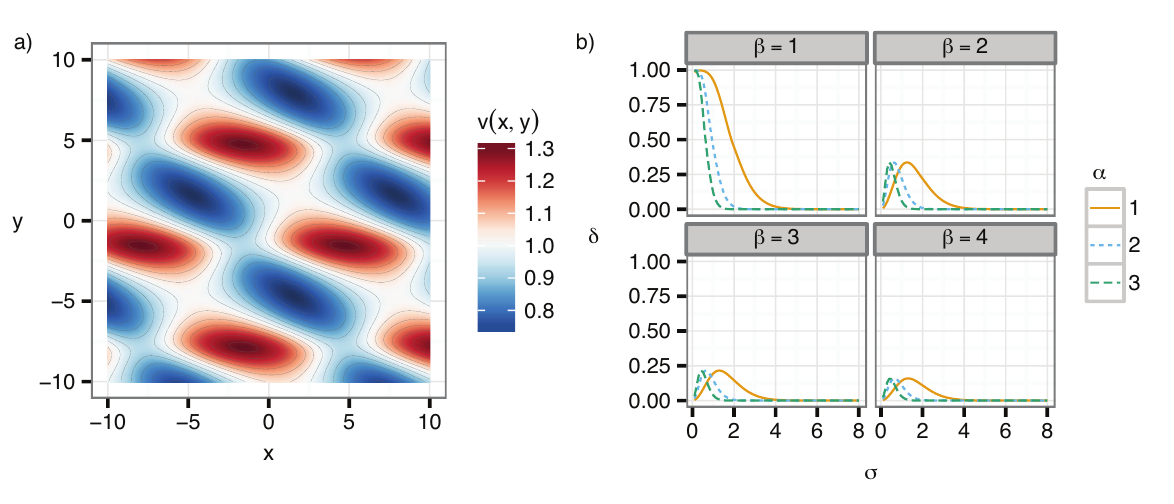}
\caption{Panel a): Numerical calculation of the function $v(x,y)$, which is the ratio of two-step density $p_{t-2\tau,t}(x | y, \sigma, \alpha, \beta)$ and one-step density $p_{t-\tau,t}(x | y, g_2(\sigma, \alpha, \beta))$, for the weighting function $w(x) = \beta + \sin(\alpha x)$. Parameter values are $\sigma=1$, $\alpha=1$, $\beta=2$. The function $v(x,y)$ varies roughly between 0.72 and 1.31. Panel b): Numerical calculation of $\delta(\sigma):=\max_{x,y}|v(x,y;\sigma)-1|$ for the weighting function $w(x) = \beta + \sin(\alpha x)$ for varying values of $\alpha$ and $\beta$. The parameter $\alpha$, which determines the wavelength of the sine, shifts the curve $\delta(\sigma)$ and varies the skewing, while retaining the height of the maximum. The parameter $\beta$ in contrast changes height of the maximum, which is the magnitude $\delta$ of the approximate robustness}
\label{f;fig3}
\end{figure}

% Simulations: sigma estimates for exp and log RSF
\begin{figure}
\includegraphics{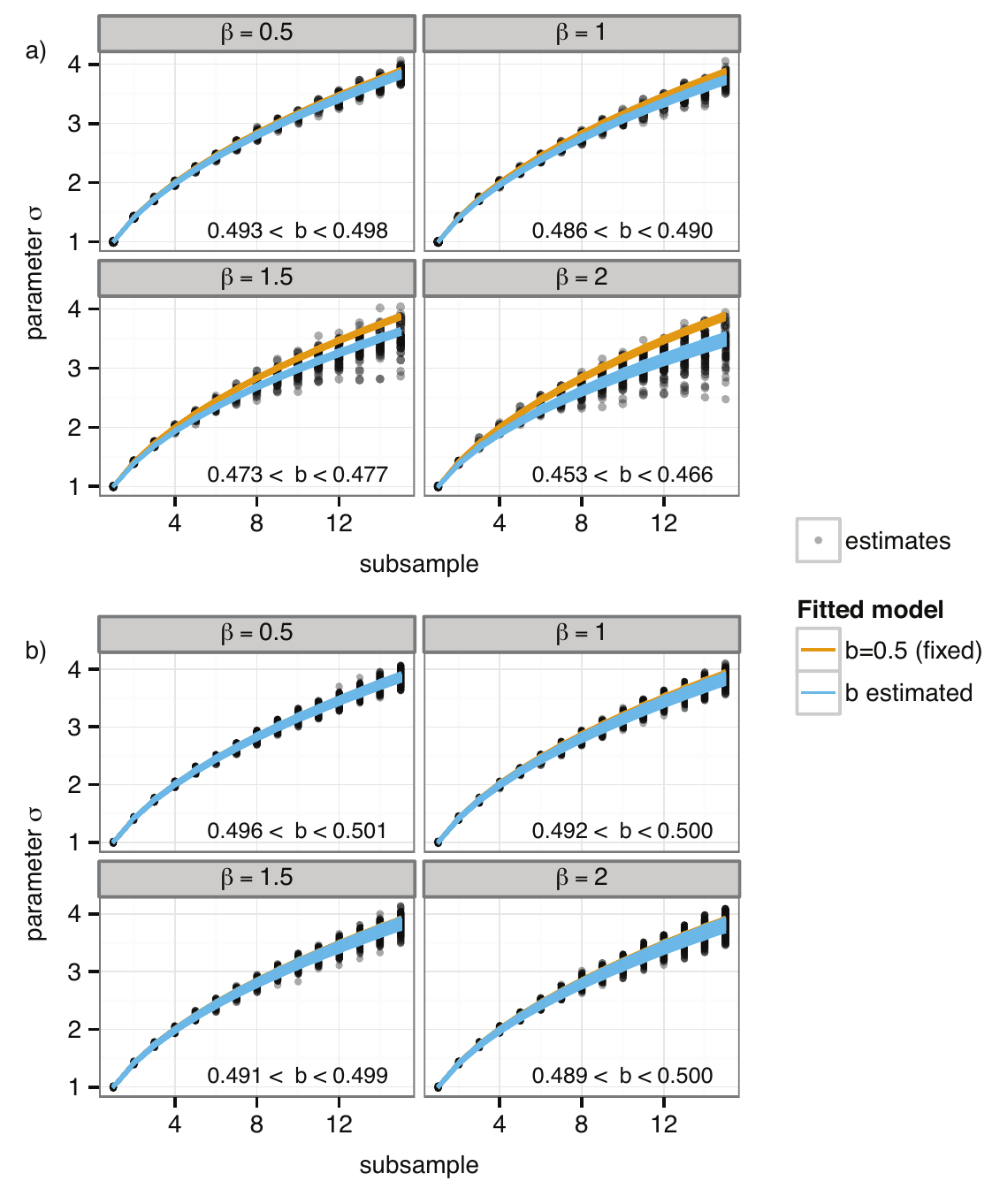}
\caption{Values of $\sigma$ against increasing subsampling amount $n$. Estimates $\hat{\sigma}$ (gray dots) were fitted with a power-relationship, stratified by trajectories, and separately for several combinations of true parameter values ($\sigma$, $\beta$, and $\alpha$ for the model with logistic RSF). The power $b$ was either fixed at 0.5 (ideal relationship; upper orange lines) or flexible and estimated (lower blue lines). The noted range of $b$ refers to variation for different parameter combinations. Estimates and predictions are standardized by the corresponding true value. Panel a): Model with exponential RSF. With increasing value of $\beta$, estimates $\hat{\sigma}$ tended to increase less with subsampling compared to the ideal relationship. Panel b): Model with logistic RSF. The fitted power-relationship was very close to the ideal relationship, such that lines indicating the ideal relationship are overlaid by lines showing the fitted relationship}
\label{f;fig4}
\end{figure}

% Simulations: beta estimates for exp and log RSF
\begin{figure}
\includegraphics{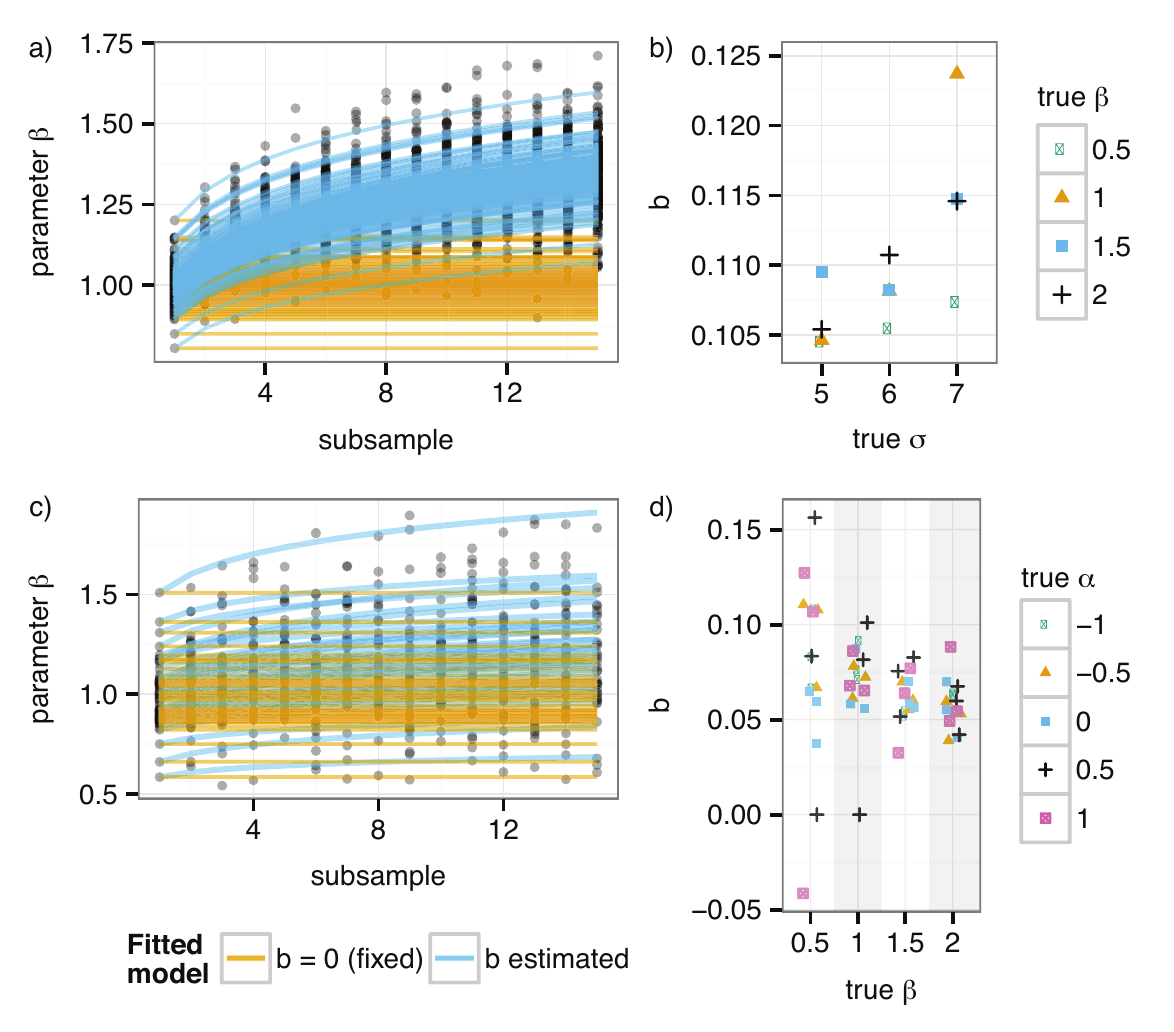}
\caption{Simulation results for the resource selection parameter $\beta$ for the model with exponential RSF (panels a,b) and logistic RSF (panels c,d). Panels a) and c): Estimates $\hat{\beta}$ (gray dots) for increasing subsampling amount $n$ were fitted with a power-relationship, stratified by trajectories, and separately for several combinations of true parameter values ($\sigma$, $\beta$, and $\alpha$ for the model with logistic RSF). The power $b$ was either fixed at zero, representing the assumption that resource-selection parameters do not change with changing temporal resolution (ideal relationship; straight orange lines), or flexible and estimated (curved blue lines). Estimates and predictions are standardized by the corresponding true value. In panel c), only estimates and predictions for $\alpha=0$, $\beta=1$ are shown. Panel b): For the exponential RSF, the estimated power $b$ was always above 0.1 and tended to increase with $\sigma$. Panel d): For the logistic RSF, the estimated power $b$ was mainly below 0.1 and tended to decrease and concentrate more for increasing $\beta$}
\label{f;fig5}
\end{figure}

% Simulations: alpha estimates for log RSF
\begin{figure}
\includegraphics{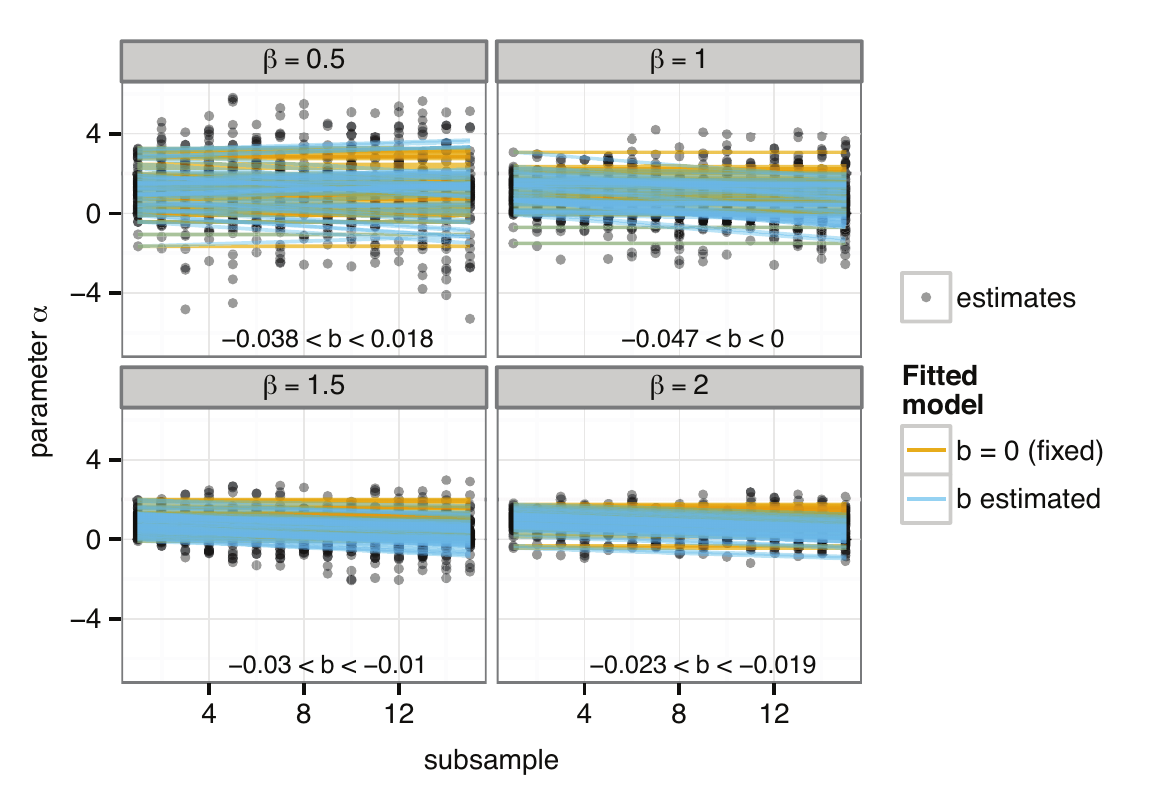}
\caption{For the model with logistic RSF, values of $\alpha$ against increasing subsampling amount $n$. Estimates were fitted with a linear relationship, stratified by trajectories, and separately for several combinations of true parameter values ($\sigma$, $\beta$, and $\alpha$ for the model with logistic RSF). The slope $b$ was either fixed at zero, representing the assumption that resource-selection parameters do not change with changing temporal resolution (ideal relationship; straight orange lines), or flexible and estimated (blue lines). Estimates and predictions are standardized by the corresponding true value and only shown for $\alpha=0.5$. The noted range of $b$ refers to variation for different parameter combinations}
\label{f;fig6}
\end{figure}

% approx robustness for simulations
\begin{figure}
\includegraphics{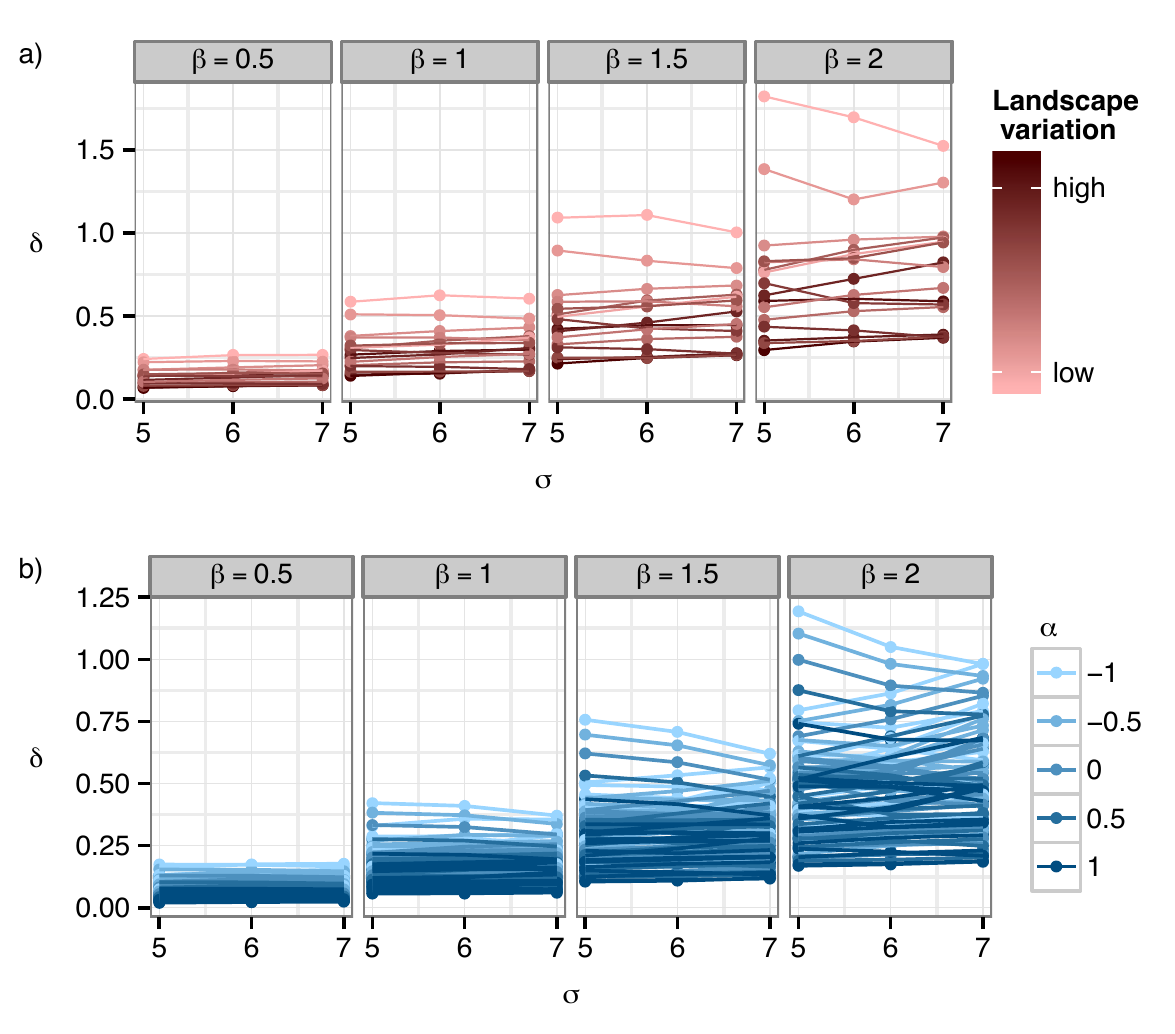}
\caption{Magnitudes of approximate robustness for the study case models with resource selection. The plots depict $\delta$ for varying values of $\sigma$ and selection parameter $\beta$ (dots). Lines join values for the same landscape $i$, $1\leq i \leq 16$. Panel a): Magnitudes for the model exponential RSF. Values of $\delta$ tend to be lower for landscapes with less variation $\text{Var}(r(x))$. Panel b): Magnitudes for the model with logistic RSF. Values of $\delta$ tend to be lower for higher values of the additional intercept parameter $\alpha$}
\label{f;fig7}
\end{figure}

\clearpage

%%%%%%%%%%%%%%%%%%%%%%%%%%%%%%%%%%%%%%%%%%%%%%%%%%%%%%%%%%%%%%%%%%

\begin{acknowledgements}
UES was supported by a scholarship from iCORE, now part of Alberta Innovates-Technology Futures and funding from the University of Alberta.  MAL gratefully acknowledges Natural Sciences and Engineering Research Council Discovery and Accelerator grants, a Canada Research Chair and a Killam Research Fellowship.
\end{acknowledgements}

%%%%%%%%%%%%%%%%%%%%%%%%%%%%%%%%%%%%%%%%%%%%%%%%%%%%%%%%%%%%%%%%%%%%%

% BibTeX users please use one of
\bibliographystyle{spbasic}      % basic style, author-year citations
\bibliography{bib}
%%%%%%%%%%%%%%%%%%%%%%%%%%%%%%%%%%%%%%%%%%%

%%%%%%%%%%%%%%%%%%%%%%%%%%%%%%%%%%%%%%%%%%%%%
% APPENDIX %
%%%%%%%%%%%%%%%%%%%%%%%%%%%%%%%%%%%%%%%%%%%%%

\appendix

\section{Proofs of results about exact robustness}
\label{s;proofs exact}

\begin{proof}[Theorem~\ref{t;linear w}]
First, note that for any standard deviation of the kernel, $\sigma$, the integral $\int_{\mathbb{R}} k_{\sigma}(y;x)w(y)\,\D y$ reduces to the weighting function evaluated at the kernel's mean,
\begin{multline}\label{e;linear integral}
\int_{\mathbb{R}} k_{\sigma}(y;x)w(y)\,\D y = \int_{\mathbb{R}} k_{\sigma}(y;x)(ay+b)\,\D y = \int_{\mathbb{R}} k_{\sigma}(y;x)(a(y-x+x)+b)\,\D y \\
 = (ax+b)\int_{\mathbb{R}} k_{\sigma}(y;x)\,\D y + a\int_{\mathbb{R}} k_{\sigma}(y;x)(y-x)\,\D y = ax+b = w(x),
\end{multline}
because $k_{\sigma}(\cdot| y)$ is a Gaussian density integrating to 1 and with vanishing first central moment. If we consider $w$ as a linear transformation of a Normally distributed random variable with mean $x$, then equation~\eqref{e;linear integral} reflects a special case of Jensen's inequality, in which equality holds.

We now show robustness of degree $n$ with parameter transformation $g_n(\sigma,a,b) = (\sqrt{n}\sigma,a,b)$ by induction. 
For $n=1$, we have the trivial transformation $g_1(\sigma, a, b) = (\sigma, a, b)$, and there is nothing to show for robustness of degree 1.

We assume that robustness or degree $n$ holds, that is we have the relationship
\begin{equation}
 p_{n}(x_n | x_0, \sigma, a, b) = p_1(x_n|x_0, \sqrt{n}\sigma, a, b).
\end{equation}
for all $x_n, x_0 \in\mathbb{R}$. For $n+1$, we use the Chapman-Kolmogorov equation and Markov property and obtain
\begin{align}
 p_{n+1}(x_{n+1}|x_0, \sigma, a, b) 
 &= \int_{\mathbb{R}^n} \prod_{k=1}^{n+1} p_1(x_k|x_{k-1}, \sigma, a, b) \, \D x_1 \dots \D x_n    \notag \\
 &= \int_{\mathbb{R}} p_1(x_{n+1}|x_n, \sigma, a, b) \left( \int_{\mathbb{R}^{n-1}} \prod_{k=1}^{n} p_1(x_k|x_{k-1}, \sigma, a, b) \, \D x_1 \dots \D x_{n-1} \right) \D x_n   \notag \\
 &= \int_{\mathbb{R}} p_1(x_{n+1}|x_n, \sigma, a, b) \, p_{n}(x_n|x_0, \sigma, a, b) \, \D x_n  \notag \\
 &= \int_{\mathbb{R}} p_1(x_{n+1}|x_n, \sigma, a, b) \, p_1(x_n|x_0, \sqrt{n}\sigma, a, b) \, \D x_n,
\end{align}
where the last step follows by induction. We can now insert the model's step probabilities and use equation~\eqref{e;linear integral} to further calculate,
\begin{align}
 p_{n+1}(x_{n+1}|x_0, \sigma, a, b)
 &= \int_{\mathbb{R}} \frac{k_{\sigma}(x_{n+1};x_n)\,w(x_{n+1})}{\int_{\mathbb{R}} k_{\sigma}(y;x_n) w(y) \, dy } \, \frac{k_{\sqrt{n}\,\sigma}(x_n;x_0)\,w(x_n)}{\int_{\mathbb{R}} k_{\sqrt{n}\,\sigma}(y;x_0) w(y)\, dy} \, \D x_n  \notag \\[8pt]
 &= \int_{\mathbb{R}} \frac{k_{\sigma}(x_{n+1};x_n)\,w(x_{n+1})}{w(x_n)} \, \frac{k_{\sqrt{n}\,\sigma}(x_n;x_0)\,w(x_n)}{w(x_0)} \, \D x_n  \notag \\[8pt]
 &= \frac{w(x_{n+1})}{w(x_0)} \int_{\mathbb{R}} k_{\sigma}(x_{n+1};x_n)\,k_{\sqrt{n}\sigma}(x_n;x_0) \, \D z.
\end{align}
Note that we have assumed that all movement steps are within the domain $\mathcal{I}$, where the weighting function is positive. Since $k_{\sigma}(x_{n+1};x_n) = k_{\sigma}(x_{n+1}-x_n;0)$, the integral in the last expression is the convolution of two Gaussian densities with variances $\sigma^2$ and $n\sigma^2$ and with means 0 and $x_0$, respectively. Because of the linearity of Gaussian random variables, this is again a Gaussian density with mean $x_0$ and variance $(n+1)\sigma^2$. Because equation~\eqref{e;linear integral} holds for the kernel with any standard deviation, we can rewrite the denominator as $w(x_0) = \int_{\mathbb{R}} k_{\sqrt{n+1}\,\sigma}(y;x_0) w(y) \, dy$. Thus,
\begin{equation}
 p_{n+1}(x_{n+1}|x_0, \sigma, a, b) 
 = \frac{k_{\sqrt{n+1}\,\sigma}(x_{n+1};x_0) w(x_{n+1})}{\int_{\mathbb{R}} k_{\sqrt{n+1}\,\sigma}(y;x_0) w(y) \, dy}
 = p_{1}(x_{n+1}|x_0,\sqrt{n+1}\,\sigma, a, b).
\end{equation}
\qed
\end{proof}

%%%%%%%%%%%%%%%

\begin{proof}[Theorem~\ref{t;exp w}]
We proceed analogously to the previous proof. The integral of weighting function and kernel with arbitrary standard deviation $\sigma$ and mean $x$ is here given by
\begin{align}
 \int_{\mathbb{R}} k_{\sigma}(y;x) \, w(y)\,\D y 
 &= \int_{\mathbb{R}} k_{\sigma}(y;x)\,C e^{ay+b}\,\D y   \notag \\
 &= \frac{C}{\sqrt{2\pi}\sigma} \int_{\mathbb{R}} \exp \left( -\frac{(y-x)^2}{2\sigma^2} + ay + b\right) \D y.   \notag
\intertext{By completing the square and using substitution $u=\frac{1}{\sqrt{2} \sigma} (y-x-a\sigma^2)$ we obtain}
 \int_{\mathbb{R}} k_{\sigma}(y;x) \, w(y)\,\D y 
  &= \frac{C}{\sqrt{2\pi}\sigma} e^{ \frac{a^2\sigma^2}{2} +ax +b } \int_{\mathbb{R}} \exp \left( - \left( \frac{y-x-a\sigma^2}{\sqrt{2}\sigma} \right)^2 \right) \D y   \notag \\
  &= \frac{C}{\sqrt{2\pi}\sigma} e^{ \frac{a^2\sigma^2}{2} +ax +b } \int_{\mathbb{R}} \exp \left( - u^2 \right) \sqrt{2} \sigma \, \D u.  \notag
\intertext{The final integral reduces to $\sqrt{2\pi}\sigma$, and therefore,}
 \int_{\mathbb{R}} k_{\sigma}(y;x) \, w(y)\,\D y &= C\, e^{\frac{a^2\sigma^2}{2} + ax + b}. \label{e;exponential integral}
\end{align}

Again, we prove robustness of degree $n$ by induction, using parameter transformation $g_n(\sigma, C, a, b) = (\sqrt{n}\sigma, C,a,b)$. In the induction step, we obtain, with help of equation~\eqref{e;exponential integral},
\begin{align}
 p_{n+1}(x_{n+1}|x_0, \sigma, a, b)
 &= \int_{\mathbb{R}} \frac{k_{\sigma}(x_{n+1};x_n)\,Ce^{ax_{n+1}+b}}{\int_{\mathbb{R}} k_{\sigma}(y;x_n) Ce^{ay+b} \, dy } \, \frac{k_{\sqrt{n}\,\sigma}(x_n;x_0)\,Ce^{ax_n+b}}{\int_{\mathbb{R}} k_{\sqrt{n}\,\sigma}(y;x_0) Ce^{ay+b}\, dy} \, \D x_n  \notag  \\[8pt]
 &= \int_{\mathbb{R}} \frac{k_{\sigma}(x_{n+1};x_n)\,Ce^{ax_{n+1}+b}}{Ce^{\frac{a^2\sigma^2}{2} + ax_n + b}} \, \frac{k_{\sqrt{n}\,\sigma}(x_n;x_0)\,Ce^{ax_n+b}}{Ce^{\frac{na^2\sigma^2}{2} + ax_0 + b}} \, \D x_n  \notag \\[8pt]
 &= \frac{e^{x_{n+1}}}{e^{\frac{(n+1)a^2\sigma^2}{2}+ax_0}} \int_{\mathbb{R}} k_{\sigma}(x_{n+1};x_n)\,k_{\sqrt{n}\,\sigma}(x_n;x_0) \, \D z  \notag \\[8pt]
 &= \frac{e^{x_{n+1}}}{e^{\frac{(n+1)a^2\sigma^2}{2}+ax_0}} \, k_{\sqrt{n+1}\,\sigma}(x_{n+1}; x_0). \label{e;t2 lhs}   \notag \\[8pt]
 &= \frac{k_{\sqrt{n+1}\,\sigma}(x_{n+1}; x_0) \, C e^{ax_{n+1}+b}}{\int_{\mathbb{R}} k_{\sqrt{n+1}\,\sigma}(y; x_0) \, C e^{ay+b} \,\D y}  \notag \\[8pt]
 &= p_{1}(x_{n+1}|x_0, \sqrt{n+1}\,\sigma, a, b)
\end{align}
\qed
\end{proof}

%%%%%%%%%%%%%%%%%%%%%%%%%%

\section{Proof of result about asymptotic robustness}
\label{s;proof asymptotic}

To highlight the main steps necessary to prove Theorem~\ref{t;asymptotic robustness}, we establish a series of intermediate results. As a first step, we show that the 2-step transition density can be broken up into a product of the form~\eqref{e;definition asymptotic} in Definition~\ref{t;definition asymptotic}.

\begin{proposition}\label{t;p times v}
The 2-step transition density of model with transitions~\eqref{e;transition density} can be written as
\begin{equation}\label{e;desired times v}
 p_2(x_t|x_{t-2\tau}, \sigma, \boldsymbol{\theta}) = p_1(x_t|x_{t-2\tau},\sqrt{2}\sigma,\boldsymbol{\theta}) \cdot v(x_t, x_{t-2\tau}; \tau),
\end{equation}
where the function $v$ is given by 
\begin{equation}\label{e;function v}
v(x_t, x_{t-2\tau}; \tau)
= \frac{\int_{\mathbb{R}} k_{\sqrt{2}\sigma}(y;x) w_{\boldsymbol{\theta}}(y)\, \D y}{\int_{\mathbb{R}} k_{\sigma}(y;x) w_{\boldsymbol{\theta}}(y)\, \D y}  
\int_{\mathbb{R}} k_{\frac{\sigma}{\sqrt{2}}}\Bigl(z; \frac{1}{2}(x_t+x_{t-2\tau})\Bigr) \, \frac{w_{\boldsymbol{\theta}}(z)}{\int_{\mathbb{R}} k_{\sigma}(y;z) w_{\boldsymbol{\theta}}(y) \, \D y} \, \D z. 
\end{equation}
\end{proposition}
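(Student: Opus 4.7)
The strategy is to apply the Chapman--Kolmogorov equation once and then reorganize the resulting expression via a standard completing-the-square identity for Gaussian densities. First I would write
\begin{equation*}
p_2(x_t|x_{t-2\tau},\sigma,\boldsymbol{\theta}) = \int_{\mathbb{R}} p_1(x_t|z,\sigma,\boldsymbol{\theta}) \, p_1(z|x_{t-2\tau},\sigma,\boldsymbol{\theta}) \,\D z
\end{equation*}
and substitute the explicit form~\eqref{e;transition density}. The factor $w_{\boldsymbol{\theta}}(x_t)$ and the normalizer $\int k_\sigma(y;x_{t-2\tau})\,w_{\boldsymbol{\theta}}(y)\,\D y$ are independent of $z$ and can be pulled outside the integral; what remains inside is $k_\sigma(x_t;z)\,k_\sigma(z;x_{t-2\tau})\,w_{\boldsymbol{\theta}}(z)$ divided by $\int k_\sigma(y;z)\,w_{\boldsymbol{\theta}}(y)\,\D y$, where the latter denominator must stay under the integral because it depends on $z$.

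The key algebraic step is to rewrite the Gaussian product by completing the square in $z$:
\begin{equation*}
k_\sigma(x_t;z)\,k_\sigma(z;x_{t-2\tau}) = k_{\sqrt{2}\sigma}(x_t;x_{t-2\tau})\cdot k_{\sigma/\sqrt{2}}\bigl(z;\tfrac{1}{2}(x_t+x_{t-2\tau})\bigr).
\end{equation*}
This is the routine identity that the joint density of two i.i.d.\ normal increments factors into the marginal for the sum and the conditional of the intermediate point given the endpoints; verifying it only requires checking that the exponents combine to $-(x_t-x_{t-2\tau})^2/(4\sigma^2)-(z-(x_t+x_{t-2\tau})/2)^2/\sigma^2$ and that the prefactors multiply to $1/(2\pi\sigma^2)$. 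Its significance for the proof is that it cleanly separates the integrand into a piece depending only on the endpoints and an honest Gaussian density in $z$ centered at the midpoint.

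Once this substitution is made, I pull the $z$-independent factor $k_{\sqrt{2}\sigma}(x_t;x_{t-2\tau})$ outside the integral and then multiply and divide by $\int k_{\sqrt{2}\sigma}(y;x_{t-2\tau})\,w_{\boldsymbol{\theta}}(y)\,\D y$. The first three factors assemble into $p_1(x_t|x_{t-2\tau},\sqrt{2}\sigma,\boldsymbol{\theta})$, while the ratio of the two endpoint normalizers and the residual midpoint-centered $z$-integral combine into exactly the function $v(x_t,x_{t-2\tau};\tau)$ given in~\eqref{e;function v}.

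There is no real analytic obstacle here: positivity of $w_{\boldsymbol{\theta}}$ guarantees all normalizers are nonzero, and Fubini applies because the integrand is nonnegative. The only step that is not pure bookkeeping is the Gaussian product identity, and the only place one must be careful is in tracking which factors depend on $z$ (and must remain inside the integral) versus on $(x_t,x_{t-2\tau})$ alone.
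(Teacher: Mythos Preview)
Your proposal is correct and follows essentially the same route as the paper: Chapman--Kolmogorov, extraction of the $z$-independent factors, the Gaussian product identity $k_\sigma(x_t;z)\,k_\sigma(z;x_{t-2\tau}) = k_{\sqrt{2}\sigma}(x_t;x_{t-2\tau})\cdot k_{\sigma/\sqrt{2}}\bigl(z;\tfrac{1}{2}(x_t+x_{t-2\tau})\bigr)$, and the multiply-and-divide by the $\sqrt{2}\sigma$ normalizer to assemble $p_1(\cdot\,|\,\cdot,\sqrt{2}\sigma,\boldsymbol{\theta})$. The paper's proof is identical in structure; your remarks on positivity and on which factors stay under the integral are also in line with what the paper implicitly uses.
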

Note that $v$ depends on $\tau$ through $\sigma$. For later convenience, we define
\begin{align}
 Q(x; \tau)&:= \frac{\int_{\mathbb{R}} k_{\sqrt{2}\sigma}(y;x) w_{\boldsymbol{\theta}}(y)\, \D y}{\int_{\mathbb{R}} k_{\sigma}(y;x) w_{\boldsymbol{\theta}}(y)\, \D y} \label{e;definition Q} \\
 I(x_1,x_2; \tau) &:= \int_{\mathbb{R}} k_{\frac{\sigma}{\sqrt{2}}}\Bigl(z; \frac{1}{2}(x_1+x_2)\Bigr) \, \frac{w_{\boldsymbol{\theta}}(z)}{\int_{\mathbb{R}} k_{\sigma}(y;z) w_{\boldsymbol{\theta}}(y) \, \D y} \, \D z. \label{e;definition I}
\end{align}

\begin{proof}
The proposition can be shown with a straightforward calculation. The 2-step transition density is given by
\begin{align}
p_2(x_t|x_{t-2\tau}&,\sigma,\boldsymbol{\theta}) \\
 &= \int_{\mathbb{R}} \frac{k_{\sigma}(x_t;z)\,w_{\boldsymbol{\theta}}(x_t)}{\int_{\mathbb{R}} k_{\sigma}(y;z) w_{\boldsymbol{\theta}}(y) \, \D y } \, \frac{k_{\sigma}(z;x_{t-2\tau})\,w_{\boldsymbol{\theta}}(z)}{\int_{\mathbb{R}} k_{\sigma}(y;x_{t-2\tau}) w_{\boldsymbol{\theta}}(y)\, \D y} \, \D z \\
 &= \frac{w_{\boldsymbol{\theta}}(x_t)}{\int_{\mathbb{R}} k_{\sigma}(y;x_{t-2\tau}) w_{\boldsymbol{\theta}}(y)\, \D y} \int_{\mathbb{R}} k_{\sigma}(x_t;z)\,k_{\sigma}(z;x_{t-2\tau}) \, \frac{w_{\boldsymbol{\theta}}(z)}{\int_{\mathbb{R}} k_{\sigma}(y;z) w_{\boldsymbol{\theta}}(y) \, \D y} \, \D z.
\end{align}
Tthe product of the two Gaussian densities in the integrand can be transformed as follows
\begin{equation}
k_{\sigma}(x_t;z)\,k_{\sigma}(z;x_{t-2\tau})
 = k_{\sqrt{2}\sigma}(x_t;x_{t-2\tau}) \, k_{\frac{\sigma}{\sqrt{2}}}\Bigl(z; \frac{1}{2}(x_t+x_{t-2\tau})\Bigr).
\end{equation}
The two-step density therefore becomes
\begin{multline}
p_2(x_t|x_{t-2\tau},\sigma,\boldsymbol{\theta}) \\
 = \frac{k_{\sqrt{2}\sigma}(x_t;x_{t-2\tau}) \,w_{\boldsymbol{\theta}}(x_t)}{\int_{\mathbb{R}} k_{\sigma}(y;x_{t-2\tau}) w_{\boldsymbol{\theta}}(y)\, \D y} \int_{\mathbb{R}} k_{\frac{\sigma}{\sqrt{2}}}\Bigl(z; \frac{1}{2}(x_t+x_{t-2\tau})\Bigr) \, \frac{w_{\boldsymbol{\theta}}(z)}{\int_{\mathbb{R}} k_{\sigma}(y;z) w_{\boldsymbol{\theta}}(y) \, \D y} \, \D z.
\end{multline}
The numerator of the first factor is the desired one-step density up to appropriate normalization. If we extend by the required normalization constant $\int_{\mathbb{R}} k_{\sqrt{2}\sigma}(y;x_{t-2\tau}) w_{\boldsymbol{\theta}}(y)\, \D y$, we obtain equations~\eqref{e;desired times v} and~\eqref{e;function v}.
\qed
\end{proof}

We are now left to show that the function $v-1$ is in the order of $\tau$ on its entire domain $\mathbb{R}^2\times\mathbb{R}^+$. In particular, this means that for any fixed $\tau^{\ast}$, the function $v(x_1, x_2; \tau^{\ast})-1$ is bounded on $\mathbb{R}^2$ via $c\tau^{\ast}$ for a constant $c$. It turns out to be helpful to analyze $v$ separately on $\mathbb{R}^2\times (0,\tau_0)$ and $\mathbb{R}^2\times [\tau_0,\infty)$ for some $\tau_0$. Because the proof is simpler for large $\tau$, we present this result first.

\begin{lemma}\label{t;v is of order tau for large tau}
Let $w$ be continuous and bounded away from zero, that is there exist $L$ and $U$ such that $0<L\leq w_{\boldsymbol{\theta}}(x) \leq U$ for all $x\in \mathbb{R}$. Let $w$ further be twice differentiable on $\mathbb{R}$ with $|w''(x)|<M$ for some $M$ and all $x\in \mathbb{R}$. For any $\tau_0 > 0$, we have $v(x_1, x_2,;\tau) -1 = \mathcal{O}(\tau)$ on $\mathbb{R}^2\times[\tau_0,\infty)$.
\end{lemma}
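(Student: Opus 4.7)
The plan is to exploit the factorisation $v(x_1, x_2; \tau) = Q(x_2;\tau) \cdot I(x_1, x_2;\tau)$ from Proposition~\ref{t;p times v} and to show that on the entire domain $\mathbb{R}^2 \times [\tau_0, \infty)$ the factor $v$ is sandwiched between two positive constants that depend only on $L$ and $U$. Once such constants are in hand, $|v-1|$ is dominated by a single constant $K$, and because $\tau \geq \tau_0$ we automatically have $K \leq (K/\tau_0)\,\tau$, giving $v-1 = \mathcal{O}(\tau)$ for free. Notably, only the uniform bounds on $w$ are needed on this side of the argument; the differentiability and bounded-$w''$ hypotheses are held in reserve for the complementary lemma handling small $\tau$, where a Taylor expansion of $w$ around the kernel centre should supply the genuine $\mathcal{O}(\tau)$ cancellation.

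First, I would bound $Q$. Using $L \leq w \leq U$ together with the fact that $k_\sigma(\cdot\,;x)$ is a probability density, both $\int_{\mathbb{R}} k_\sigma(y;x) w(y)\,\D y$ and $\int_{\mathbb{R}} k_{\sqrt{2}\sigma}(y;x) w(y)\,\D y$ lie in $[L, U]$, independently of $x$ and $\sigma$. Hence $L/U \leq Q(x;\tau) \leq U/L$. Next I would bound $I$. Applying the same estimate to the denominator appearing inside its integrand yields $L/U \leq w(z)/\int_{\mathbb{R}} k_\sigma(y;z) w(y)\,\D y \leq U/L$ pointwise in $z$. Since $I$ is exactly the expectation of this ratio under the probability density $k_{\sigma/\sqrt{2}}(\cdot\,;(x_1+x_2)/2)$, the bounds transfer: $L/U \leq I(x_1,x_2;\tau) \leq U/L$.

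Multiplying the two estimates gives $(L/U)^2 \leq v(x_1,x_2;\tau) \leq (U/L)^2$ for every $(x_1,x_2) \in \mathbb{R}^2$ and every $\tau > 0$. Setting $K := \max\{(U/L)^2 - 1,\; 1 - (L/U)^2\}$, we obtain $|v(x_1,x_2;\tau) - 1| \leq K$ uniformly, and restricting to $\tau \geq \tau_0$ gives $|v - 1| \leq (K/\tau_0)\,\tau$, which is the required $\mathcal{O}(\tau)$ bound on $\mathbb{R}^2 \times [\tau_0,\infty)$. There is essentially no obstacle in this regime: the uniform bounds on $w$ flatten both $Q$ and $I$ into the fixed interval $[L/U, U/L]$, so the absence of any actual convergence of $v$ to $1$ as $\tau \to \infty$ is harmlessly absorbed by the lower cutoff $\tau_0$.
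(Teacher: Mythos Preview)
Your argument is correct and in fact cleaner than the paper's. Both proofs rest on the same closing observation: once $|v-1|$ is bounded by a fixed constant $K$, the restriction $\tau\geq\tau_0$ lets one write $K\leq (K/\tau_0)\tau$, which is the $\mathcal{O}(\tau)$ statement. The difference lies in how that constant is obtained. The paper first invokes Taylor's theorem and the bound $|w''|<M$ to sandwich $\int k_\sigma(y;x)w(y)\,\D y$ between $w(x)\pm\frac{M}{2}\omega^2\tau$, and then combines this with the crude $[L,U]$ bounds to reach $|v-1|\leq \frac{U^2-L^2}{L^2}+\frac{MU}{L^2}\omega^2\tau$. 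You skip the Taylor step entirely and use only $L\leq w\leq U$, which already forces $Q,I\in[L/U,U/L]$ and hence $|v-1|\leq (U/L)^2-1=\frac{U^2-L^2}{L^2}$. Your bound is the paper's bound with the superfluous $\tau$-term removed, and your resulting $\mathcal{O}(\tau)$ constant $K/\tau_0$ is correspondingly smaller than the paper's $K/\tau_0+\frac{MU}{L^2}\omega^2$. Your remark that the differentiability and bounded-$w''$ hypotheses are idle in this regime is exactly right; the paper carries them along but does not actually need them until the companion lemma for $\tau<\tau_0$.
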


\begin{proof}
Let $\tau_0$ be a number away from zero and fixed. Our goal is to establish bounds on the functions $Q$ and $I$, as defined in~\eqref{e;definition Q} and~\eqref{e;definition I}, and to use these to place a bound on $v-1$. Because $w$ is twice differentiable we can apply Taylor's theorem to obtain a linear approximation for $w$ using any point $x\in \mathbb{R}$, 
\begin{equation}
 w_{\boldsymbol{\theta}}(y) = w_{\boldsymbol{\theta}}(x) + w'(x)(y-x)+R(y),
\end{equation}
where $R(y)$ is the remainder term. This leads to
\begin{equation}
 \int_{\mathbb{R}} k_{\sigma}(y;x)\, w_{\boldsymbol{\theta}}(y)\, \D y
 = w_{\boldsymbol{\theta}}(x) \smash{\int_{\mathbb{R}}} k_{\sigma}(y;x)\,\D y + w'(x) \smash{\int_{\mathbb{R}}} k_{\sigma}(y;x)\,(y-x)\,\D y + \smash{\int_{\mathbb{R}}} k_{\sigma}(y;x)\,R(y)\,\D y,
\end{equation}
where the first term on the RHS becomes $w_{\boldsymbol{\theta}}(x)$, because the kernel integrates to 1, and the integral in the second term is the first central moment of the kernel, hence vanishes. The remainder $R(y)$, using the Lagrange form, is given by $R(y) = \frac{w''(\xi)}{2}(y-x)^2$, for some $\xi$ between $x_2$ and $y$. Since the second derivative of $w$ is assumed to be globally bounded, we have $|R(y)|\leq \frac{M}{2}(y-x)^2$. We use this to place bounds on the third term, recognizing that the remaining integral \text{$\int_{\mathbb{R}} k_{\sigma}(y;x)\,(y-x)^2\,\D y$} is the second central moment of the Gaussian kernel $k_{\sigma}$, which is given by its variance $\sigma^2 = \omega^2\tau$. Therefore,
\begin{equation}\label{e;Taylor bounds on integral}
 w_{\boldsymbol{\theta}}(x) - \frac{M}{2}\omega^2\tau \leq \int_{\mathbb{R}} k_{\sigma}(y;x)\, w_{\boldsymbol{\theta}}(y)\, \D y \leq w_{\boldsymbol{\theta}}(x) + \frac{M}{2}\omega^2\tau. 
\end{equation}
In general, the lower bound can be arbitrarily close to zero, therefore we cannot simply invert this inequality to obtain an estimate on the inverse of the integral. Instead, we use the bounds on $w$ and again the fact $\int_{\mathbb{R}} k_{\sigma}(y;x)\,\D y=1$ for any $\sigma$ and any $x\in \mathbb{R}$ to establish
\begin{equation}\label{e;w bounds on integral}
 0 < L \leq \int_{\mathbb{R}} k_{\sigma}(y;x) \, w_{\boldsymbol{\theta}}(y) \,\D y \leq U,
\end{equation}
which can be inverted. 
Since inequalities~\eqref{e;Taylor bounds on integral} and~\eqref{e;w bounds on integral} hold for any $\sigma$ and any $x \in \mathbb{R}$, they allow us to place bounds on both $Q$ and $I$. For $Q$, we obtain
\begin{equation}
 \frac{1}{U}\bigl(w_{\boldsymbol{\theta}}(x)-M\omega^2\tau\bigr) \leq Q(x;\tau) \leq \frac{1}{L}\bigl( w_{\boldsymbol{\theta}}(x)+M\omega^2\tau \bigr)
\end{equation}
for all $x\in \mathbb{R}$, $\tau\in \mathbb{R}^+$. We can avoid the dependency of the bounds on $x$ by again invoking the bounds on $w$,
\begin{equation}
 \frac{1}{U}\bigl(L-M\omega^2\tau\bigr) \leq Q(x) \leq \frac{1}{L}\bigl( U+M\omega^2\tau \bigr).
\end{equation}
For the function $I$, we only make use of the bounds on $w$ and inequality~\eqref{e;w bounds on integral} and get
\begin{equation}
 0 < \frac{L}{U} \leq I(x_1, x_2;\tau) \leq \frac{U}{L}
\end{equation}
for all $x_1, x_2 \in \mathbb{R}$, $\tau\in \mathbb{R}^+$. We can now continue to calculate $v-1$. An upper bound is immediately given by
\begin{equation}
 v(x_1, x_2;\tau) -1 = Q(x_1;\tau)\,I(x_1, x_2;\tau) -1 \leq \frac{U^2-L^2}{L^2} + \frac{MU}{L^2}\omega^2\tau.
\end{equation}
With only few more additional steps, we obtain a lower bound by simply drawing upon $L \leq U$, its squared version and its inverse,
\begin{equation}
 -\big(v(x_1, x_2;\tau)-1\big) \leq \frac{U^2-L^2}{U^2} + \frac{ML}{U^2}\omega^2\tau \leq \frac{U^2-L^2}{L^2} + \frac{MU}{L^2}\omega^2\tau.
\end{equation}
Define $C:= \frac{U^2-L^2}{L^2\tau_0} + \frac{MU}{L^2}\omega^2$ for the $\tau_0$ chosen up front. Then,
\begin{align}
 |v(x_1, x_2;\tau) -1 |
 &\leq \frac{U^2-L^2}{L^2} + \frac{MU}{L^2}\omega^2\tau - C\tau + C\tau \\
 &= \frac{U^2-L^2}{L^2} - \frac{U^2-L^2}{L^2\tau_0}\tau + C\tau \\
 &= \left(1-\frac{\tau}{\tau_0}\right)\frac{U^2-L^2}{L^2} +C\tau.
\end{align}
The product on the RHS is non-positive for $\tau\geq\tau_0$, and hence $|v(x_1, x_2;\tau) -1 |\leq C\tau$ for all $\mathbb{R}^2\times [\tau_0, \infty)$.
\qed
\end{proof}

The bounds on $Q$ and $I$, and thus $v-1$, established in the preceding proof are not sufficient to conclude the result as $\tau\rightarrow 0$, unless $L=U$, which is the trivial case of a constant weighting function. More suitable bounds, however, can be found if inequality~\eqref{e;Taylor bounds on integral} can be inverted. This can be achieved by assuming $\tau$ to be small enough.

\begin{lemma}\label{t;v is of order tau for small tau}
Let $w$ be continuous and bounded away from zero, that is there exist $L$ and $U$ such that $0<L\leq w_{\boldsymbol{\theta}}(x) \leq U$ for all $x\in \mathbb{R}$. Let $w$ further be twice differentiable on $\mathbb{R}$ with $|w''(x)|<M$ for some $M$ and all $x\in \mathbb{R}$. Let $\tau_0 = \frac{2L}{M\omega^2}$. Then $v(x_1, x_2,;\tau) -1 = \mathcal{O}(\tau)$ on $\mathbb{R}^2\times (0,\tau_0)$.
\end{lemma}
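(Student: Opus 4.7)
The plan is to control $v(x_1,x_2;\tau) - 1$ on $\mathbb{R}^2 \times (0,\tau_0)$ by using the decomposition $v = Q \cdot I$ from Proposition~\ref{t;p times v} and bounding $Q-1$ and $I-1$ separately. I would begin by applying the same Taylor-with-remainder argument used in the proof of Lemma~\ref{t;v is of order tau for large tau} to each of the three Gaussian kernels appearing in $Q$ and $I$, namely $k_{\sigma/\sqrt{2}}$, $k_\sigma$, and $k_{\sqrt{2}\sigma}$, whose variances are $\omega^2\tau/2$, $\omega^2\tau$, and $2\omega^2\tau$ respectively. This yields bounds of the form
\begin{equation*}
w_{\bm{\theta}}(x) - \epsilon_j \leq \int_{\mathbb{R}} k_{\sigma_j}(y;x)\, w_{\bm{\theta}}(y)\, \D y \leq w_{\bm{\theta}}(x) + \epsilon_j,
\end{equation*}
with each $\epsilon_j$ a constant multiple of $M\omega^2\tau$.

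The crucial difference from Lemma~\ref{t;v is of order tau for large tau} is that for $\tau \in (0,\tau_0)$ the error $\tfrac{M}{2}\omega^2\tau$ associated with $k_\sigma$ satisfies $\tfrac{M}{2}\omega^2\tau < L \leq w_{\bm{\theta}}(x)$ uniformly in $x$, so the denominator $\int k_\sigma(y;x) w_{\bm{\theta}}(y)\,\D y$ is bounded below by $w_{\bm{\theta}}(x) - \tfrac{M}{2}\omega^2\tau > 0$. This positivity, which was unavailable in the large-$\tau$ regime, allows the Taylor inequalities to be inverted cleanly. Writing $Q(x;\tau)-1 = (R_1 - R_2)/(w_{\bm{\theta}}(x) + R_2)$ with $R_1,R_2$ the remainders from the numerator and denominator of $Q$, and noting that $|R_1-R_2|$ is bounded by a constant times $\tau$ while the denominator stays bounded below, gives $|Q(x;\tau)-1| \leq C_1\tau$ uniformly in $x$. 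For $I$, I would use the analogous Taylor bound on the inner normalization to show $w_{\bm{\theta}}(z) / \int k_\sigma(y;z)\, w_{\bm{\theta}}(y)\,\D y = 1 + \mathcal{O}(\tau)$ uniformly in $z$, and then pull this correction out of the outer integral against $k_{\sigma/\sqrt{2}}$, which integrates to $1$, to obtain $|I(x_1,x_2;\tau) - 1| \leq C_2\tau$ uniformly in $(x_1,x_2)$. Combining these via $v-1 = (Q-1)\, I + (I-1)$, and using that $I$ stays bounded, yields $|v-1| \leq C\tau$ on the required domain.

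The main obstacle is technical bookkeeping: the denominator $L - \tfrac{M}{2}\omega^2\tau$ arising from the inverted Taylor bound vanishes as $\tau \to \tau_0$, so the coefficient $C_1$ threatens to blow up near the right endpoint of $(0,\tau_0)$. To produce a genuinely uniform Landau constant I would either restrict to a subinterval such as $(0,\tau_0/2)$, on which this denominator is at least $L/2$, and appeal to Lemma~\ref{t;v is of order tau for large tau} (applied with $\tau_0$ replaced by $\tau_0/2$) to cover $[\tau_0/2,\infty)$; or, equivalently, present the two lemmas as providing complementary bounds that together yield $v-1=\mathcal{O}(\tau)$ on all of $\mathbb{R}^2 \times \mathbb{R}^+$, as required for the conclusion of Theorem~\ref{t;asymptotic robustness}.
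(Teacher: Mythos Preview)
Your approach is essentially identical to the paper's: decompose $v=Q\cdot I$, apply the Taylor-remainder bound to the normalizing integrals, and for $\tau<\tau_0$ exploit that $w_{\bm\theta}(x)-\tfrac{M}{2}\omega^2\tau>0$ so the inequality can be inverted. The paper then obtains $|Q-1|\leq C_1\tau$ and $|I-1|\leq C_2\tau$ and combines them via $|QI-1|\leq |Q-1|\,|I-1|+|Q-1|+|I-1|$, which is algebraically equivalent to your $(Q-1)I+(I-1)$.

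Your worry about the constant blowing up near the right endpoint is well founded and in fact catches a genuine sloppiness in the paper's own argument: the paper bounds the denominator by $2L-M\omega^2\tau_0$, which for $\tau_0=\tfrac{2L}{M\omega^2}$ is exactly zero, so its displayed constant $C_1$ is not finite as written. Your proposed repair---work on $(0,\tau_0/2)$, where the denominator is at least $L$, and then invoke Lemma~\ref{t;v is of order tau for large tau} with the smaller threshold to cover $[\tau_0/2,\infty)$---is the clean way to obtain a single uniform Landau constant and is an improvement over the paper's presentation.
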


\begin{proof}
Here we develop bounds on $Q$ and $I$ such that both $Q-1$ and $I-1$ are in the order of $\tau$. Let $\tau\leq\tau_{0}$ for $\tau_0$ as defined in the lemma. Then the lower bound of equation \eqref{e;Taylor bounds on integral} is bounded away from zero,
\begin{equation}
 w_{\boldsymbol{\theta}}(x) - \frac{M}{2}\omega^2\tau \geq w_{\boldsymbol{\theta}}(x) - \frac{M}{2}\omega^2\tau_0 > w_{\boldsymbol{\theta}}(x) - \frac{M}{2}\omega^2 \frac{2L}{M\omega^2} = w_{\boldsymbol{\theta}}(x) - L\geq 0.
\end{equation}
Hence we can invert the inequality~\eqref{e;Taylor bounds on integral} and obtain
\begin{equation}\label{e;bounds on Q}
 \frac{w_{\boldsymbol{\theta}}(x)-M\omega^2\tau}{w_{\boldsymbol{\theta}}(x)+\frac{M}{2}\omega^2\tau} \leq Q(x;\tau) \leq \frac{w_{\boldsymbol{\theta}}(x)+M\omega^2\tau}{w_{\boldsymbol{\theta}}(x)-\frac{M}{2}\omega^2\tau}.
\end{equation}
Note that the values in the numerators and denominators differ slightly because the variances of the kernel $k$ in the numerator and denominator of $Q$ differ by a factor of 2.

Since $2w_{\boldsymbol{\theta}}(x)-M\omega^2\tau \geq 2L-M\omega^2\tau_0>0$, we can conclude
\begin{equation}
 Q(x;\tau) - 1
 \leq \frac{w_{\boldsymbol{\theta}}(x)+M\omega^2\tau - w_{\boldsymbol{\theta}}(x)-\frac{M}{2}\omega^2\tau}{w_{\boldsymbol{\theta}}(x)-\frac{M}{2}\omega^2\tau}
 = \frac{M\omega^2\tau}{2w_{\boldsymbol{\theta}}(x)-M\omega^2\tau}
 \leq \frac{M\omega^2\tau}{2L-M\omega^2\tau_0},
\end{equation}
for all $x\in\mathbb{R}$ and $\tau<\tau_0$. Using $2w_{\boldsymbol{\theta}}(x)+M\omega^2\tau \geq 2w_{\boldsymbol{\theta}}(x)\geq 2L$, we similarly obtain,
\begin{equation}
 -(Q(x;\tau)-1) \leq \frac{3M\omega^2\tau}{2w_{\boldsymbol{\theta}}(x)+M\omega^2\tau} \leq \frac{3M}{2L}\omega^2\tau
\end{equation}
for all $x\in\mathbb{R}$ and $\tau<\tau_0$. If we set $C_1 := \max\left( \frac{M\omega^2}{2L-2\omega^2\tau_0},\frac{3M\omega^2}{2L} \right)$, it follows that \text{$\vert Q(x;\tau) - 1 \vert \leq C_1\tau$} on $\mathbb{R}^2\times (0,\tau_0)$.

Using analogous arguments as before, we can fine an find an upper bound on $I$,
\begin{align}
I(x_1, x_2;\tau) 
 &= \int_{\mathbb{R}} k_{\frac{\sigma}{\sqrt{2}}}\Bigl(z; \frac{1}{2}(x_1+x_2)\Bigr) \, \frac{w_{\boldsymbol{\theta}}(z)}{\int_{\mathbb{R}} k_{\sigma}(y;z) w_{\boldsymbol{\theta}}(y) \, dy} \, \D z \\ 
 &\leq \int_{\mathbb{R}} k_{\frac{\sigma}{\sqrt{2}}}\Bigl(z; \frac{1}{2}(x_1+x_2)\Bigr) \, \frac{w_{\boldsymbol{\theta}}(z)}{w_{\boldsymbol{\theta}}(z) - \frac{M}{2}\omega^2\tau} \, \D z \\
 &= \int_{\mathbb{R}} k_{\frac{\sigma}{\sqrt{2}}}\Bigl(z; \frac{1}{2}(x_1+x_2)\Bigr) \, \frac{w_{\boldsymbol{\theta}}(z)-\frac{M}{2}\omega^2\tau+\frac{M}{2}\omega^2\tau}{w_{\boldsymbol{\theta}}(z) - \frac{M}{2}\omega^2\tau} \, \D z \\
 &= \int_{\mathbb{R}} k_{\frac{\sigma}{\sqrt{2}}}\Bigl(z; \frac{1}{2}(x_1+x_2)\Bigr) \,\D z + \int_{\mathbb{R}} k_{\frac{\sigma}{\sqrt{2}}}\Bigl(z; \frac{1}{2}(x_1+x_2)\Bigr) \, \frac{\frac{M}{2}\omega^2\tau}{w_{\boldsymbol{\theta}}(z) - \frac{M}{2}\omega^2\tau} \, \D z \\
 &\leq 1 + \int_{\mathbb{R}} k_{\frac{\sigma}{\sqrt{2}}}\Bigl(z; \frac{1}{2}(x_1+x_2)\Bigr) \, \frac{\frac{M}{2}\omega^2\tau}{L - \frac{M}{2}\omega^2\tau_0} \, \D z 
 = 1+ \frac{M\omega^2\tau}{2L - M\omega^2\tau_0}.
\end{align}
A lower bound is given by
\begin{align}
 I(x_1, x_2;\tau)
 &\geq \int_{\mathbb{R}} k_{\frac{\sigma}{\sqrt{2}}}\Bigl(z; \frac{1}{2}(x_1+x_2)\Bigr) \, \frac{w_{\boldsymbol{\theta}}(z)}{w_{\boldsymbol{\theta}}(z) + \frac{M}{2}\omega^2\tau} \, \D z \\
 &= 1 - \int_{\mathbb{R}} k_{\frac{\sigma}{\sqrt{2}}}\Bigl(z; \frac{1}{2}(x_1+x_2)\Bigr) \, \frac{\frac{M}{2}\omega^2\tau}{w_{\boldsymbol{\theta}}(z) + \frac{M}{2}\omega^2\tau} \, \D z 
 \geq 1-\frac{M\omega^2\tau}{2L}.
\end{align}
Setting $C_2 := \frac{M\omega^2\tau}{2L - M\omega^2\tau_0}$, we obtain $|I(x_1,x_2;\tau)-1| \leq C_2\tau$ on $\mathbb{R}^2\times (0,\tau_0)$.

We can now estimate $v-1$ as follows,
\begin{align}
 |v(x_1,x_2;\tau) - 1| 
 &= |Q_{\tau}\, I_{\tau} -1| 
 \leq |Q_{\tau}-1|\,|I_{\tau}-1| + |Q_{\tau}-1| + |I_{\tau}-1| \\
 &\leq C_1\,C_2\tau^2+(C_1+C_2)\,\tau 
 \leq \big(C_1\,C_2\tau_0+C_1+C_2\big)\,\tau,
\end{align}
for all $x_1, x_2\in\mathbb{R}$ and all $\tau<\tau_0$.
\qed
\end{proof}

Lemmata~\ref{t;v is of order tau for large tau} and~\ref{t;v is of order tau for small tau}, together with proposition~\ref{t;p times v} prove Theorem~\ref{t;asymptotic robustness}.

%%%%%%%%%%%%%%%%%%%%%%%%%%%%
% Supplemental Figures
%%%%%%%%%%%%%%%%%%%%%%%%%%%%

\clearpage

\section{Supplemental Figures}
\label{s;supp figures}

\begin{figure}[h!]
\begin{center}
\includegraphics[width=0.9\textwidth]{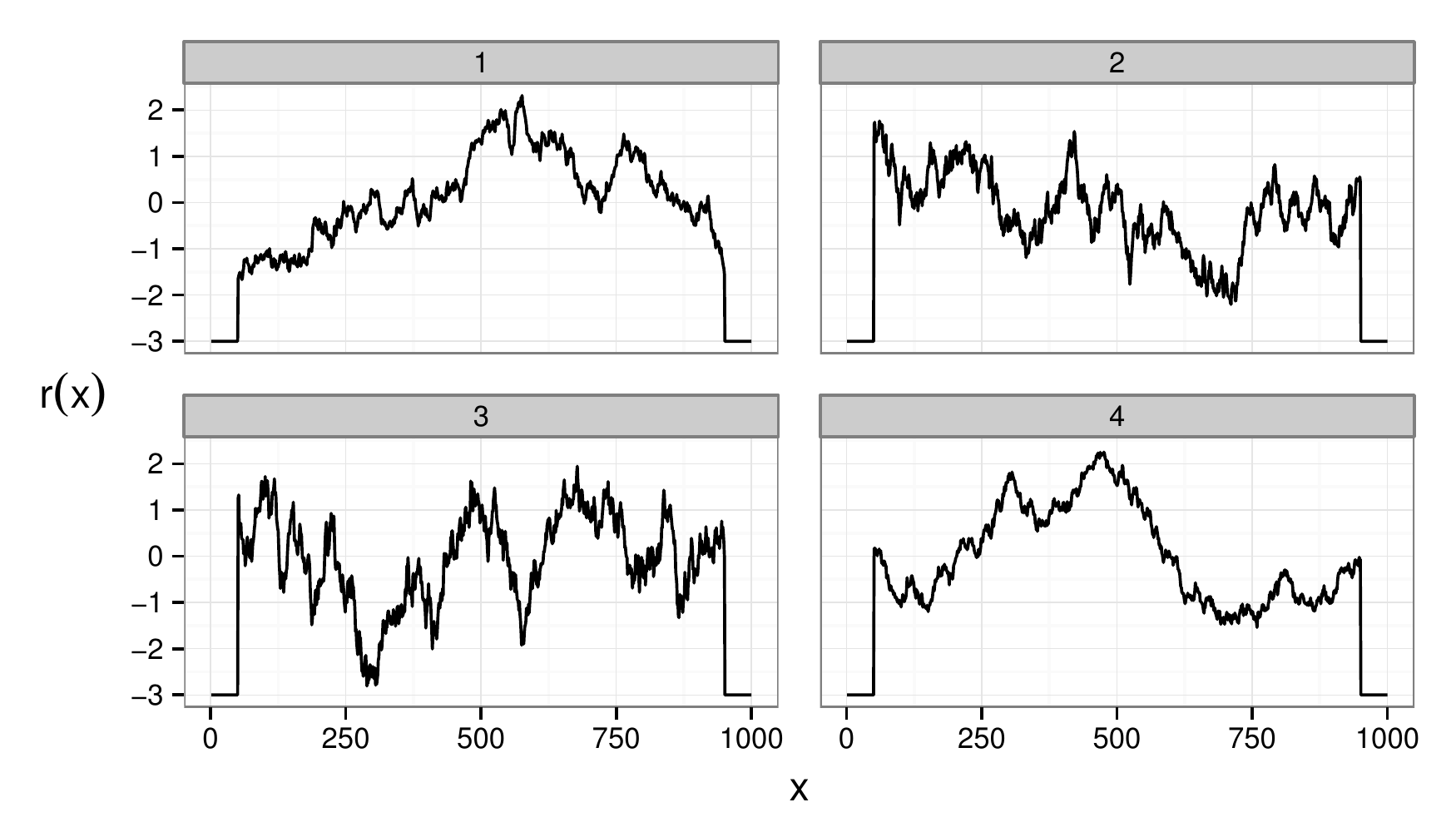}
\caption{Four of the simulated resource landscapes used for sampling movement trajectories. The depicted landscapes have been generated with spatial autocorrelation $\text{Cov}(r(x),r(y))=\exp\bigl(\frac{|x-y|}{s}\bigr)$ for $s=200,300,400,500$. We standardized landscapes to range within the interval $(-3,3)$. At the boundaries, we set values to -3 to avoid movement close to the boundary and thus boundary effects in the transition densities due to the normalization constant.}
\label{f;lands}
\end{center}
\end{figure}

\begin{figure}[h!]
\begin{center}
\includegraphics[width=0.8\textwidth]{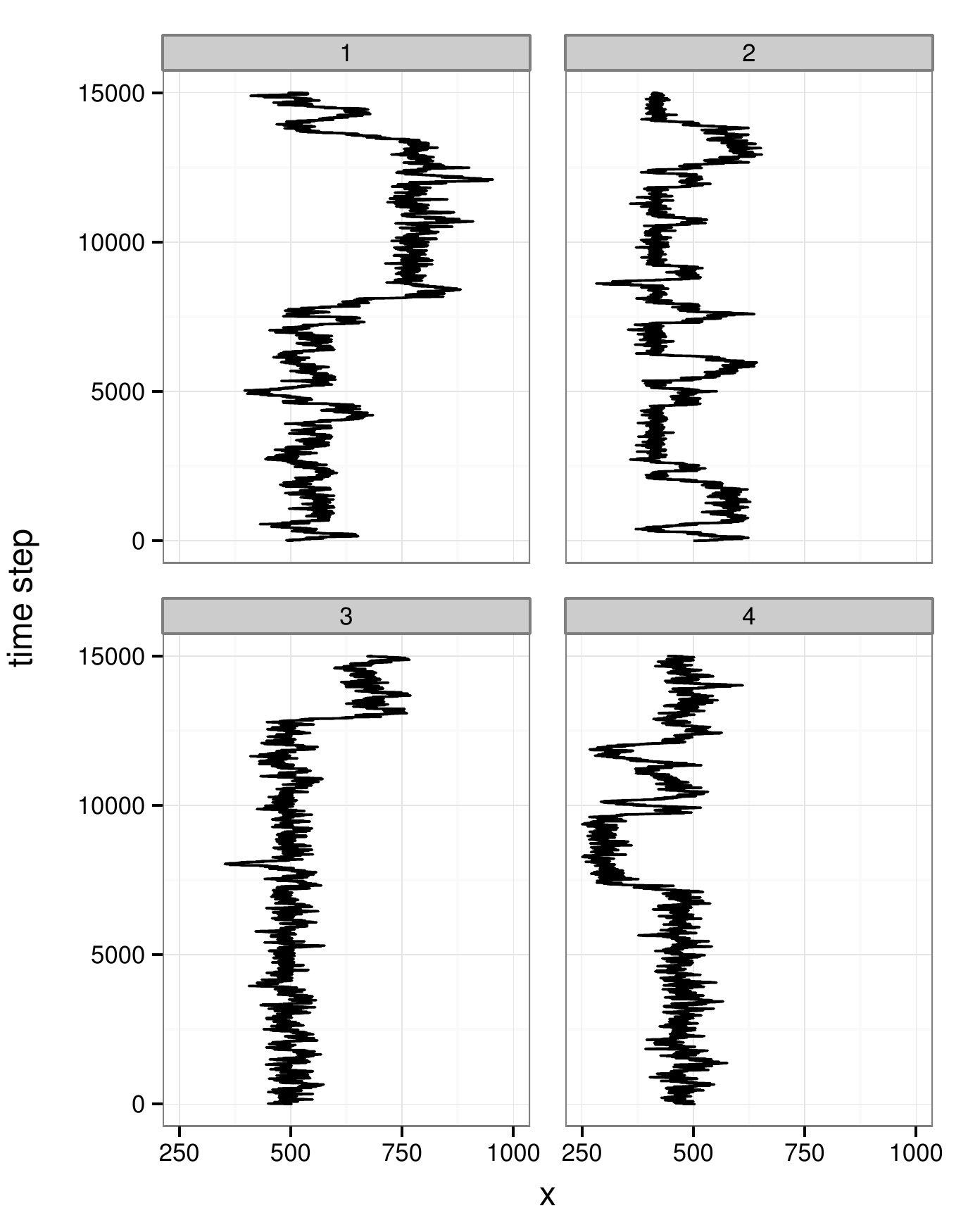}
\caption{Four of the simulated trajectories from the model with exponential RSF. The trajectories were generated using the parameter values $\sigma=6$ and $\beta=1$. The underlying resource landscapes are the landscapes depicted in Fig.~\ref{f;lands}, in the same order.}
\label{f;traj exp}
\end{center}
\end{figure}

\begin{figure}[h!]
\begin{center}
\includegraphics[width=0.8\textwidth]{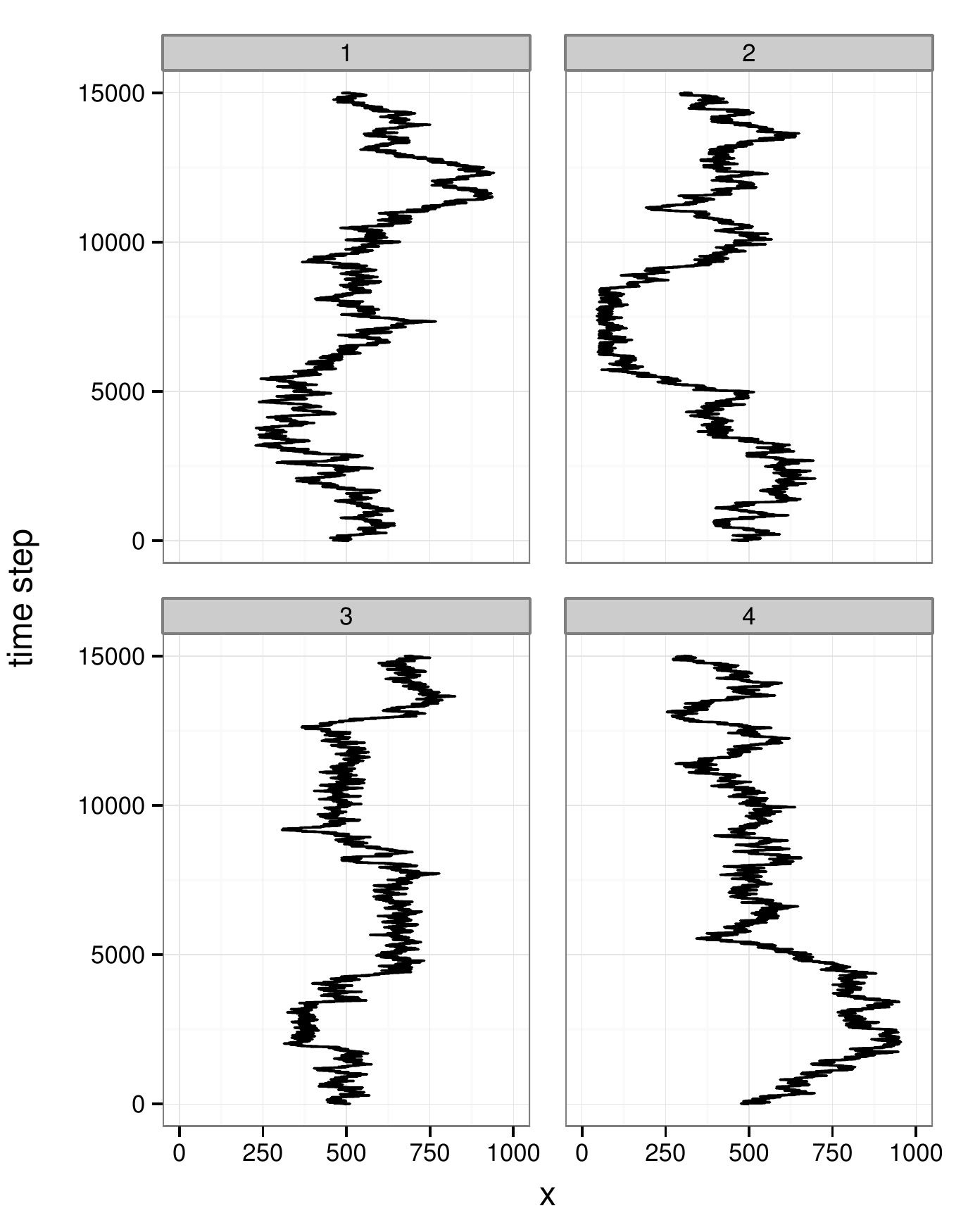}
\caption{Four of the simulated trajectories from the model with logistic RSF. The trajectories were generated using the parameter values $\sigma=6$ and $\beta=1$ (same as in Fig.~\ref{f;traj exp}) and $\alpha=0$. The underlying resource landscapes are again the landscapes depicted in Fig.~\ref{f;lands}, in the same order.}
\label{f;traj log}
\end{center}
\end{figure}

\end{document}